\documentclass[10pt,conference,letterpaper]{style/IEEEtran}

\usepackage{epsfig}
\usepackage{amssymb}
\usepackage{amsmath}
\usepackage{color}
\usepackage{times}
\usepackage{arydshln}
\usepackage{url}
\usepackage{algorithm}
\usepackage{algorithmic}

\usepackage{flushend}
\usepackage{float}
\usepackage{subcaption}
\usepackage{todonotes}
\usepackage{balance}


\title{\huge Assisting Service Providers In Peer to peer Marketplaces:\\ Maximizing Gain Over Flexible Attributes}

\author{
{Abolfazl Asudeh{\small $~^{\dag}$}, Azade Nazi{\small $~^{\ddag}$}, Nick Koudas{\small $~^{\dag\dag}$}, Gautam Das{\small $~^{\dag}$} }%
\vspace{1.6mm}\\
\fontsize{10}{10}\selectfont\itshape
$^{\dag}$\,University of Texas at Arlington, $^\ddag$\,Microsoft Research, $^{\dag\dag}$\,University of Toronto\\
\fontsize{9}{9}\selectfont\ttfamily\upshape
$^{\dag}$\{ab.asudeh@mavs,~gdas@cse\}.uta.edu, $^\ddag$aznazi@microsoft.com, $^{\dag\dag}$koudas@cs.toronto.edu
}
\date{}



\newtheorem{theorem}{Theorem} 

\newtheorem{definition}{Definition}
\usepackage[font={small,bf}]{caption} 
\setlength{\textfloatsep}{3pt} 
\setlength{\floatsep}{3pt} 
\setlength{\intextsep}{3pt} 
\dbltextfloatsep 8pt plus 2pt minus 4pt 
\dblfloatsep 11pt plus 2pt minus 2pt 

\begin{document}
\maketitle

\begin{abstract}
Peer to peer marketplaces enable transactional exchange of services directly between people. In such platforms, those providing a service are faced with various choices. For example in travel peer to peer marketplaces, although some amenities (attributes) in a property are fixed, others are relatively {\em flexible} and can be provided without significant effort. Providing an attribute is usually associated with a cost. Naturally, different sets of attributes may have a different ``gains'' (monetary or otherwise) for a service provider. Consequently, given a limited budget, deciding which attributes to offer is challenging. 

In this paper, we formally introduce and define the problem of {\em Gain Maximization over Flexible Attributes} (GMFA). We first prove that the problem is NP-hard and show that there is no approximate algorithm with a constant approximate ratio for it, unless there is one for the quadratic knapsack problem.
Instead of limiting the application to a specific gain function, we then provide a practically efficient exact algorithm to the GMFA problem 
that can handle any gain function as long as it belongs to the general class of monotonic functions.
As the next part of our contribution, we introduce the notion of {\em frequent-item based count} (FBC), which utilizes nothing but the existing tuples in the database to define the notion of gain in the absence of extra information.
We present the results of a comprehensive experimental evaluation of the proposed techniques on real dataset from AirBnB
and a case study that confirm the efficiency and practicality of our proposal.
\end{abstract}

\section{Introduction}
\vspace{0.05in}
\noindent
Peer to peer marketplaces enable both ``obtaining'' and ``providing'' in a temporary or permanent fashion valuable services through direct interaction between people~\cite{ertz2016collaborative}.
Travel peer to peer marketplaces such as {\it AirBnB}, {\it HouseTrip}, {\it HomeAway}, and {\it Vayable}\footnote{\small{airbnb.com; housetrip.com; homeaway.com; vayable.com.}}, work and service peer to peer marketplaces such as {\it UpWork}, {\it FreeLancer}, {\it PivotDesk}, {\it ShareDesk}, and {\it Breather}\footnote{\small{upwork.com; freelancer.com; pivotdesk.com; sharedesk.net; breather.com}}, 
car sharing marketplaces such as {\it BlaBlaCar}\footnote{\small{blablacar.com}}, education peer to peer marketplaces such as {\it PopExpert}\footnote{\small{popexpert.com}}, and pet peer to peer marketplaces such as {\it DogVacay}\footnote{\small{dogvacay.com}}
are a few examples of such marketplaces.
In travel peer to peer marketplaces, for example, the service caters to accommodation rental;
hosts are those providing the service ({\em service providers}), and guests, who are looking for temporary rentals, are receiving service
({\em service receivers}).
Hosts list properties, along with a set of amenities for each, while guests utilize the search interface to identify 
suitable properties to rent. Figure~\ref{fig:sampleset} presents a sample set of rental accommodations.
Each row corresponds to a property and each column represents an amenity. For instance, the first property offers 
\texttt{Breakfast}, \texttt{TV}, and \texttt{Internet} as amenities but does not offer \texttt{Washer}.

\begin{figure}[!t]
\centering
\begin{small}
\begin{tabular}{|l|c|c|c|c|}
    \hline 
         {\bf ID}    & {\bf Breakfast} & {\bf TV} & {\bf Internet} & {\bf Washer}    \\ \hline 
     Accom. 1  & 1  & 1			 & 1       & 0       \\ \hline
     Accom. 2  & 1  & 1			 & 1       & 1       \\ \hline
     Accom. 3  & 0  & 1			 & 1       & 0       \\ \hline
     Accom. 4  & 1  & 1			 & 1       & 0       \\ \hline
     Accom. 5  & 0  & 1			 & 1       & 1       \\ \hline
     Accom. 6  & 1  & 0			 & 1       & 0       \\ \hline
     Accom. 7  & 1  & 0			 & 0       & 0       \\ \hline
     Accom. 8  & 1  & 1			 & 0       & 1       \\ \hline
     Accom. 9  & 0  & 1			 & 1       & 1       \\ \hline
     Accom. 10 & 1  & 0			 & 0       & 1       \\ \hline
\end{tabular}
\end{small}
\caption{A sample set of rental accomodations}\label{fig:sampleset}
\end{figure} 

Although sizeable effort has been devoted to design user-friendly search tools assisting service receivers in the search process, 
little effort has been recorded to date to build tools to assist service providers.
Consider for example a host in a travel peer to peer marketplace; while listing a property in the service for (temporary) rent, the host is faced with various choices.
Although some amenities in the property are relatively {\em fixed}, such as number of rooms for rent, or existence
of an elevator, others are relatively {\em flexible}; for example offering \texttt{Breakfast} or \texttt{TV} as an amenity. Flexible amenities
can be added without a significant effort.
Although amenities make sense in the context of travel peer to peer marketplaces (as part of the standard terminology used in the service), 
for a general peer to peer marketplace we use the term {\em attribute} and refer to the subsequent choice of attributes as
{\em flexible attributes}.

Service providers participate in the service with specified objectives; for instance hosts may want to increase overall occupancy and/or optimize their anticipated revenue.
Since there is a {\em cost} (e.g., monetary base cost to the host to offer internet) associated with each flexible attribute, it is challenging for service providers to choose the set of flexible
attributes to offer given some budget limitations (constraints). An informed choice of attributes to offer should maximize the objectives of the service provider in each case subject to any constraints. Objectives may vary by application; for example an objective could be maximize the number of times a listing appears on search results, the position in the search result ranking or other. This necessitates the existence of functions that relate flexible attributes to such objectives in order to aid the service provider's decision. We refer to the service provider's objectives in a generic sense as {\em gain} and to the functions that relate attributes to gain as {\em gain functions} in what follows.

In this paper, we aim to assist service providers in peer to peer marketplaces by suggesting those flexible attributes which maximize their gain. Given a service with known flexible attributes and budget limitation, our objective is to identify a set of attributes to suggest to service providers in order to maximize the gain. We refer to this problem as {\em Gain Maximization over Flexible Attributes} ({\bf GMFA}). Since the target applications involve mainly ordinal attributes, in this paper, we focus our attention on ordinal attributes and we assume that numeric attributes (if any) are suitably discretized. Without loss of generality, we first design our algorithms for binary attributes, and provide the extension to ordinal attributes in \S~\ref{subsec:disc-categorical}.

Our contribution in this paper is twofold. First, we formally define the general problem of {\em Gain Maximization over Flexible Attributes} ({\bf GMFA}) in peer to peer marketplaces and, as our main contribution, propose a general solution which is applicable to a general class of gain functions.
Second, without making any assumption on the existence extra information other than the dataset itself, we introduce the notion of {\em frequent-item based count} as a simple yet compelling gain function in the absence of other sources of information.
As our first contribution, using a reduction from the quadratic knapsack problem~\cite{quadraticknapsack, garyjohnson}, we prove that (i) the general GMFA is NP-hard, 
and (ii) that there is no approximate algorithm with a fixed ratio for GMFA unless there is one for quadratic knapsack problem.
We provide a (practically) efficient exact algorithm to the GMFA problem for a general class of monotonic gain functions\footnote{\small{Monotonicity of the gain function simply means that adding a new attribute does not reduce the gain.}}.
This generic proposal is due to the fact that gain function design is application specific and depends on the available information.
Thus, instead of limiting the solution to a specific application, the proposed algorithm gives the freedom to easily apply any arbitrary gain function into it.
In other words, it works for {\em any arbitrary monotonic gain function} no matter how and based on what data it is designed.
In a rational setting in which
attributes on offer add value, we expect that all gain functions will be monotonic.
More specifically, given any user defined monotonic gain function, we first propose an algorithm called {\bf I-GMFA} (Improved GMFA) that exploits the properties of the monotonic function to suggest efficient ways to explore the solution space. Next, we introduce several techniques to speed up this algorithm, both theoretically and practically, changing the way the traversal of the search space is performed. 
To do so, we propose the {\bf G-GMFA} (General GMFA) Algorithm which transforms the underlying problem structure from a lattice to a tree, preorderes the the attributes, and amortizes the computation cost over different nodes during the traversal. 

The next part of our contribution, focuses on the gain function design.
It is evident that gain functions could vary depending on the underlying objectives and extra information
such as a weighting of attributes based on some criteria (e.g., importantance), that can be
naturally incorporated in our framework without changes to the algorithm.
The gain function design,  as discussed in Appendix~\ref{subsec:dis-userpref}, is application specific and may vary upon on the availability of information such as query logs or reviews; thus, rather than assuming the existence of any specific extra information, 
we, alternatively, introduce the notion of {\em frequent-item based count (FBC)} that utilizes nothing but the existing tuples in the database to define the notion of gain for the absence of extra information.
Therefore, even the applications with limited access to the data such as a third party service for assisting the service providers that may only have access to the dataset tuples can utilize G-GMFA while applying FBC inside it.
The motivation behind the definition of FBC is that (rational) service providers provide attributes based on demand.
For example, in Figure~\ref{fig:sampleset} the existence of \texttt{TV} and \texttt{Internet} together in more than half of the rows, indicates the demand for this combination of amenities. Also, as shown in the real case study provided in \S~\ref{subsec:casestudy}, popularity of \texttt{Breakfast} in the rentals located in Paris indicates the demand for this amenity there.
Since counting the number of frequent itemsets is \#P-complete~\cite{gunopulos2003discovering}, computing the FBC is challenging.
In contrast with a simple algorithm that is an adaptation of Apriori~\cite{apriori} algorithm, we propose a practical output-sensitive algorithm for computing FBC that runs in the time linear in its output value. The algorithm uses an innovative approach that avoids iterating over the frequent attribute combinations by partitioning them into disjoint sets and calculating the FBC as the summation of their cardinalities.

In summary, we make the following contributions in this paper.
\begin{itemize}
\item We introduce the notion of flexible attributes and the novel problem of gain maximization over flexible attributes (GMFA) in peer to peer marketplaces.
\item We prove that the general GMFA problem is NP-hard and we prove the difficulty of designing an approximate algorithm.
\item For the general GMFA problem, we propose an algorithm called {\bf I-GMFA} (Improved GMFA) that exploits the properties of the monotonic function to suggest efficient ways to explore the solution space.
\item We propose the {\bf G-GMFA} (General GMFA) algorithm which transforms the underlying problem structure from a lattice to a tree, preorders the attributes, and amortizes the computation cost over nodes during the traversal. Given the application specific nature of the gain function design, {\bf G-GMFA} is designed such that any arbitrary monotonic gain function can simply get plugged into it.
\item While not promoting any specific gain function, without any assumption on the existence of extra information other than the dataset itself, we propose {\em frequent-item based count (FBC)} as a simple yet compelling gain function in the absence of other sources of data.
\item In contrast with the simple Apriori-based algorithm, we propose and present the algorithm {\bf FBC} to efficiently assess gain and demonstrate its practical significance.
\item We present the results of a comprehensive performance study on real dataset from AirBnB to evaluate the proposed algorithms. Also, in a real case study, we to illustrate the practicality of the approaches.
\end{itemize}

This paper is organized as follows. 
\S~\ref{sec:preliminaries} provides formal definitions and introduces notation stating formally the problem we focus and its associated complexity.
We propose the exact algorithm for the general class of monotonic gain functions in \S~\ref{sec:exact}. In \S~\ref{sec:gain}, we study the gain function design and propose 
an alternative gain function for the absence of user preferences. The experiment results are provided in \S~\ref{sec:exp}, related work is discussed in \S~\ref{sec:related}, and the paper is concluded in \S~\ref{sec:conclusion}.

\section{Preliminaries}
\label{sec:preliminaries}
\noindent{\bf Dataset Model:}
We model the entities under consideration in a peer to peer marketplace as a dataset $\mathcal{D}$ with $n$ tuples and $m$ attributes $\mathcal{A}=\{A_1,\dots , A_m\}$. For a tuple $t \in \mathcal{D}$, we use $t[A_i]$ to denote the value of the attribute $A_i $ in $t$. Figure~\ref{fig:sampleset} presents a sample set of rental accommodations with $10$ tuples and $4$ attributes. Each row corresponds to a tuple (property) and each column represents an attribute. For example, the first property offers \texttt{Breakfast}, \texttt{TV}, and \texttt{Internet} as amenities but does not offer \texttt{Washer}. 
Note that, since the target applications involve mainly ordinal attributes, we focus our attention on such attributes and we assume that numeric attributes (if any) are suitably discretized.
Without loss of generality, throughout the paper, we consider the attributes to be binary and defer the extension of algorithms to ordinal attributes in \S~\ref{subsec:disc-categorical}.
We use $\mathcal{A}_t$ to refer to the set of attributes for which $t[A_i]$ is non zero; i.e. $\mathcal{A}_t=\{A_i\in\mathcal{A}~|~t[A_i]\neq 0\}$, and the size of $\mathcal{A}_t$ is $k_t$.

\noindent{\bf Query Model:}
Given the dataset $\mathcal{D}$ and set of binary attributes $\mathcal{A}'\subseteq \mathcal{A}$ , the query $Q(\mathcal{A}', \mathcal{D})$ returns the set of tuples in $\mathcal{D}$ where contain $\mathcal{A}'$ as their attributes; formally:

\begin{align}\label{eq:query}
Q(\mathcal{A}', \mathcal{D}) = \{t\in \mathcal{D} | \mathcal{A}'\subseteq \mathcal{A}_t \}
\end{align}

Similarly, the query model for the ordinal attributes is as following: given the dataset $\mathcal{D}$, the set of ordinal attributes $\mathcal{A}'\subseteq \mathcal{A}$, and values $\mathcal{V}$ where $V_{i}\in \mathcal{V}$ is a value in the domain of $A_i\in\mathcal{A}'$,  $Q(\mathcal{A}', \mathcal{V}, \mathcal{D})$ returns the tuples in $\mathcal{D}$ that for attribute $A_i\in\mathcal{A}'$, $V_i\leq t[A_i]$.

\noindent{\bf Flexible Attribute Model:}
In this paper, we assume an underlying \emph{cost}\footnote{\small{Depending on the application it may represent a monetary value.}} associated with each attribute $A_i$, i.e., a flexible attribute $A_i$ can be added to a tuple $t$ by incurring $cost[A_i]$. For example, the costs of providing attributes \texttt{Breakfast}, \texttt{TV}, \texttt{Internet}, and \texttt{Washer}, in Figure~\ref{fig:sampleset}, on an annual basis, are $cost=[1000, 300,\\ 250, 700]$. For the ordinal attributes, $cost( A_i,V_1,V_2)$ represents the cost of changing the value of $A_i$ from $V_1$ to $V_2$. Our approach places no restrictions on the number of
flexible attributes in $\mathcal{A}$. For the ease of explanation, in the rest of paper we assume all the attributes in $\mathcal{A}$ are flexible.

\noindent
We also assume the existence of a gain function $gain(.)$, that given the dataset $\mathcal{D}$, for a given attribute combination $\mathcal{A}_i\subseteq \mathcal{A}$, provides a score showing how desirable $\mathcal{A}_i$ is. For example in a travel peer to peer marketplace, given a set of $m$ amenities, such a function could quantify the anticipated gain 
(e.g., visibility) for a host if a subset of these amenities are provided by the host on a certain property. 

Table~\ref{table:notations} presents a summary of the notation used in this paper. We will provide the additional notations for \S~\ref{sec:gain} at Table~\ref{table:notations2}.
Next, we formally define the general {\em Gain Maximization over Flexible Attributes} ({\bf GMFA}) in peer to peer marketplaces. 
\begin{table}[!t]
\center
\footnotesize
\vspace{5mm}
\caption{Table of Notations}
\begin{tabular}{|@{}c@{}|@{}c@{}|}
\hline
{\bf Notation}& {\bf Meaning}\\ \hline
$\mathcal{D}$& The dataset\\ \hline
$\mathcal{A}$& The set of the attributes in database $\mathcal{D}$\\ \hline 
$m$& The size of $\mathcal{A}$\\ \hline
$n$& The number of tuples in database $\mathcal{D}$ \\ \hline
$t[A_i]$& The value of attribute $A_i$ in tuple $t$\\ \hline
$\mathcal{A}_t$& The set of non-zero attributes in tuple $t$\\ \hline
$cost[A_i]$& The cost to change the binary attribute $A_i$ to $1$ \\ \hline
$B$& The budget \\ \hline
$gain(.)$& The gain function \\ \hline
$\mathcal{L}_{\mathcal{A}_i}$ & The lattice of attribute combinations $\mathcal{A}_i$ \\ \hline 
$V(\mathcal{L}_{\mathcal{A}_i})$ &  The set of nodes in  $\mathcal{L}_{\mathcal{A}_i}$\\ \hline
$\mathcal{B}(v_i)$ &  The bit representative of the node $v_i$\\ \hline
$v(\mathcal{A}_i)$ &  The node with attribute combination $\mathcal{A}_i$\\ \hline
$v(\mathcal{B}_i)$ &  The node with the bit representative $\mathcal{B}_i$\\ \hline
$\ell(v_i)$ &  The level of the node $v_i$\\ \hline
$cost(v_i)$ &  The cost associated with the node $v_i$\\ \hline
parents($v_j$,$\mathcal{L}_{\mathcal{A}_i}$) &  The parents of the node $v_j$ in $\mathcal{L}_{\mathcal{A}_i}$\\ \hline
$\rho(\mathcal{B}(v_i))$ &  The index of the right-most zero in $\mathcal{B}(v_i)$\\ \hline
parent$_T(v_i)$ &  The parent of $v_i$ in the tree data structure\\ \hline 
\end{tabular}
\label{table:notations}
\end{table}

\subsection{General Problem Definition}
\label{sec:generalProb}
We define the general problem of {\em Gain Maximization over Flexible Attributes} ({\bf GMFA}) in peer to peer marketplaces as a constrained optimization problem. The general problem is agnostic to the choice of the gain function. Given $gain(.)$, a service provider with a certain budget $B$ strives to maximize
$gain(.)$ by considering the addition of flexible attributes to the service. For example, in a travel peer to peer marketplace, a host who owns an accommodation ($t\in \mathcal{D}$) and has a limited (monetary) budget $B$ aims to determine which amenities should be offered in the property such that the costs to offer the amenities
to the host are within the budget $B$, and the gain ($gain(.)$\footnote{\small{In addition to the input set of attributes, the function $gain(.)$ may depend to other variables such as the number of attribute ($n$); one such function is discussed in \S~\ref{sec:gain}.}}) resulting from offering the amenities is maximized.
Formally, our GMFA problem is defined as an optimization problem as shown in Figure~\ref{fig:problmeDef}.

\begin{figure}[!t]
	\small
	\medskip\noindent
	\framebox[\columnwidth]{\parbox{0.9\columnwidth}{ \textbf{\textsc{Gain Maximization over Flexible Attributes (GMFA):}}
	\\ \textit{Given
	a dataset $\mathcal{D}$ with the set of binary flexible attributes $\mathcal{A}$
	where each attribute $A_i\in\mathcal{A}$ is associated with cost $cost[A_i]$,
	a gain function $gain(.)$,
	a budget $B$,
	and a tuple $t\in \mathcal{D}$,
	identify
	a set of attributes $\mathcal{A}^\prime\subseteq \mathcal{A} \backslash \mathcal{A}_t$ ,
	such that
	$$\underset{\forall A_i\in\mathcal{A}^\prime}{\sum} cost[A_i]\leq B$$
	while maximizing
	$$gain(\mathcal{A}_t\cup\mathcal{A}^\prime, \mathcal{D})$$
	}}}
	\caption{GMFA problem definition}
	\label{fig:problmeDef}
\end{figure}

We next discuss the complexity of the general GMFA problem and we show the difficulty of designing an approximation algorithm with a constant approximate ratio for this problem.

\subsection{Computational Complexity}

We prove that GMFA is NP-hard\footnote{\small Please note that GMFA is NP-complete even for the polynomial time gain functions.} by reduction from quadratic knapsack~\cite{quadraticknapsack, garyjohnson} which is NP-complete~\cite{garyjohnson}.
The reduction from the QPK shows that it can be modeled as an instance of GMFA; thus, a solution for QPK cannot not be used to solve GMFA.

\vspace{0.02in}
\noindent{\bf Quadratic 0/1 knapsack (QKP):} Given a set of items $I$, each with a weight $W_i$, a knapsack with capacity $C$, and the profit function $P$, determine a subset of items such that their cumulative weights do not exceed $C$ and the profit is maximized. Note that the profit function $P$ is defined for the choice of individual items but also considers an extra profit that can be earned if two items are selected jointly. In other words, for $n$ items, $P$ is a matrix of $n \times n$, where the diagonal of matrix $P$ ($P_{i,i}$) depicts the profit of item $i$ and an element in $P_{i,j}$ ($i \neq j$) represents the profit of picking item $i$, and item $j$ together.

\begin{theorem}\label{th:npcomplete}
The problem of Gain maximization over flexible attributes (GMFA) is NP-hard.
\end{theorem}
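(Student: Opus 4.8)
The plan is to establish NP-hardness by a polynomial-time reduction from the Quadratic 0/1 Knapsack Problem (QKP), which is NP-complete~\cite{garyjohnson}. Given an arbitrary QKP instance with item set $I=\{1,\dots,n\}$, weights $W_i$, capacity $C$, and profit matrix $P$, I would build a GMFA instance as follows: introduce one flexible attribute $A_i$ for each item $i$ so that $\mathcal{A}=\{A_1,\dots,A_n\}$; set $cost[A_i]=W_i$; set the budget $B=C$; and take the seed tuple $t$ to be the all-zero tuple so that $\mathcal{A}_t=\emptyset$ and the admissible candidate set is all of $\mathcal{A}'\subseteq\mathcal{A}$. Finally, define the gain function directly from the profit matrix by $gain(\mathcal{A}',\mathcal{D})=\sum_{A_i\in\mathcal{A}'}P_{i,i}+\sum_{A_i,A_j\in\mathcal{A}',\,i<j}P_{i,j}$, which exactly mirrors the quadratic objective of QKP.

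The core of the argument is to exhibit a value-preserving bijection between the feasible solutions of the two instances. Since $\mathcal{A}_t=\emptyset$, a set $\mathcal{A}'\subseteq\mathcal{A}$ corresponds to the item subset $S=\{\,i\mid A_i\in\mathcal{A}'\,\}$, and the budget constraint $\sum_{A_i\in\mathcal{A}'}cost[A_i]\le B$ becomes precisely the knapsack constraint $\sum_{i\in S}W_i\le C$; moreover, by construction $gain(\mathcal{A}',\mathcal{D})$ equals the QKP profit of $S$. Hence a subset is feasible for GMFA iff the corresponding item subset is feasible for QKP, and the two achieve identical objective values, so a maximizer of one yields a maximizer of the other. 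Passing to the decision versions (``is there a feasible selection with objective at least $k$?''), the two instances are equivalent, which gives a Karp reduction and therefore the NP-hardness of GMFA.

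The step that requires the most care is justifying that the chosen $gain(.)$ is a legitimate GMFA input rather than a contrived oracle. Two points matter here. First, the general GMFA problem is explicitly agnostic to the gain function, and the quadratic form above is computable in $O(|\mathcal{A}'|^2)$ time; thus the reduction remains valid even when one insists on polynomially computable gain functions, matching the footnoted claim that GMFA stays NP-complete for poly-time gain. Second, whenever the QKP profits are non-negative (the standard setting), the constructed gain is monotonic, since adding an attribute only introduces non-negative diagonal and pairwise terms; consequently the hardness persists within the monotonic class of gain functions that the later algorithms target. The remaining obligations---verifying that the construction is polynomial in the QKP input size and that $\mathcal{D}$ may be taken as a single dummy tuple that the gain function effectively ignores---are routine.
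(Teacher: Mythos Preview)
Your proposal is correct and follows essentially the same route as the paper: a Karp reduction from QKP obtained by mapping items to attributes, weights to costs, capacity to budget, taking $\mathcal{D}=\{t\}$ with $\mathcal{A}_t=\emptyset$, and instantiating $gain(\cdot)$ as the quadratic profit function. Your additional remarks on polynomial-time computability of $gain(\cdot)$ and its monotonicity under non-negative profits are useful clarifications that the paper leaves implicit.
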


\begin{proof}
The decision version of GMFA is defined as follows: given a decision value $k$, dataset $\mathcal{D}$ with the set of flexible attributes $\mathcal{A}$, associated costs $cost$, a gain function $gain(.)$, a budget $B$, and a tuple $t\in \mathcal{D}$, decide if there is a $\mathcal{A}^\prime\subseteq \mathcal{A} \backslash \mathcal{A}_t$ , such that
$\underset{\forall A_i\in\mathcal{A}^\prime}{\sum} cost[A_i]\leq B$
and $gain(\mathcal{A}_t\cup\mathcal{A}^\prime)\geq k$.\\

We reduce the decision version of quadratic knapsack problem (QKP) to the decision version of GMFA and argue that the solution to QKP exists, if and only if, a solution to our problem exists.
The decision version of QKP, in addition to $I$, $W$, $C$, and $P$, accepts the decision variable $k$ and decides if there is a subset of $I$ that can be accommodated to the knapsack with profit $k$.

A mapping between QKP to GMFA is constructed as follows:
The set of items $I$ is mapped to flexible attributes $\mathcal{A}$, the weight of items $W$ is mapped to the cost of the attributes $cost$, the capacity of the knapsack $C$ to budget $B$, and the decision value $k$ in QKP to the value $k$ in GMFA. Moreover, we set $\mathcal{D}=\{t\}$ and $\mathcal{A}_t=\emptyset$; the gain function, in this instance of GMFA, can be constructed based on the profit matrix $P$ as follows:
$$
gain(\mathcal{A}') = \sum_{\forall A_i\in \mathcal{A}'} P_i + \sum_{\forall A_i\in \mathcal{A}'}\sum_{\forall A_j\neq A_i \in \mathcal{A}'} P_{i,j}
$$
The answer to the QKP is yes (resp. no) if the answer to its corresponding GMFA is yes (resp. no).
\end{proof}

\noindent
Note that GMFA belongs to the NP-complete class only for the $gain(.)$ functions that are polynomial. In those cases the verification of whether or not a given subset $\mathcal{A}^\prime\subseteq \mathcal{A} \backslash \mathcal{A}_t$, has cost less than or equal to $B$ and a gain at least equal to $k$ can be performed in polynomial time.

In addition to the complexity, the reduction from the quadratic knapsack presents the difficulty in designing an approximate algorithm for GMFA.
Rader et. al.~\cite{rader2002quadratic} prove that QKP does not have a polynomial time approximation algorithm with fixed approximation ratio unless P=NP. Even for the cases that $P_{i,j}\geq 0$, it is an open problem whether or not there is an approximate algorithm with a fixed approximate ratio for QKP~\cite{pisinger2007quadratic}. In Theorem~\ref{th:noapprox}, we show that a polynomial approximate algorithm with a fixed approximate ratio for GMFA guarantees a fixed approximate ratio for QKP, and its existence contradicts the result of~\cite{rader2002quadratic}. 
Furthermore, studies on the constrained set functions optimization, such as~\cite{feldman2014constrained}, also admits that maximizing a monotone set function up to an acceptable approximation, even subject to simple constraints is not possible.

\begin{theorem}\label{th:noapprox}
There is no polynomial-time approximate algorithm with a fixed approximate ratio for GMFA unless there is an approximate algorithm with a constant approximate ratio for QKP.
\end{theorem}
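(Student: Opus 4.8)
The plan is to argue by contraposition, reusing the polynomial-time reduction from QKP to GMFA already constructed in the proof of Theorem~\ref{th:npcomplete}. The crucial observation I would build on is that this reduction is not merely a decision reduction but is in fact \emph{value-preserving}: from a QKP instance $(I,W,C,P)$ it produces a GMFA instance on a single tuple $t$ with $\mathcal{A}_t=\emptyset$, identical item-to-attribute, weight-to-cost, and capacity-to-budget correspondences, and the gain function $gain(\mathcal{A}')=\sum_{A_i\in\mathcal{A}'}P_i+\sum_{A_i\in\mathcal{A}'}\sum_{A_j\neq A_i\in\mathcal{A}'}P_{i,j}$. Under this correspondence, a subset of items $S\subseteq I$ is feasible for QKP if and only if the matching attribute set $\mathcal{A}'$ is feasible for GMFA (both obey the same budget constraint), and the profit of $S$ equals $gain(\mathcal{A}')$ exactly. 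Hence the two instances share the same feasible solutions and the same objective values, so in particular their optima coincide, $OPT_{\mathrm{QKP}}=OPT_{\mathrm{GMFA}}$.

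First I would make this value-preservation precise and note that both the construction of the GMFA image and the translation of a returned attribute set back into a set of items take polynomial time. Next I would suppose, toward the theorem's stated alternative, that there is a polynomial-time algorithm $\mathcal{M}$ approximating GMFA within a fixed ratio $\alpha$, i.e.\ on every instance $\mathcal{M}$ returns a feasible $\mathcal{A}'$ with $gain(\mathcal{A}')\geq OPT_{\mathrm{GMFA}}/\alpha$. Running $\mathcal{M}$ on the GMFA image of an arbitrary QKP instance and reading off the corresponding item set $S$ then yields a feasible QKP solution whose profit equals $gain(\mathcal{A}')\geq OPT_{\mathrm{GMFA}}/\alpha = OPT_{\mathrm{QKP}}/\alpha$, which is precisely a polynomial-time fixed-ratio approximation for QKP and hence the conclusion the theorem demands.

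The proof is short because the ratio transfers with no loss, which is exactly the payoff of insisting on a value-preserving reduction rather than a generic Karp reduction. The one place that needs care is confirming that the gain function used in the reduction is admissible for the GMFA framework: it is computable in polynomial time (a sum of at most $|\mathcal{A}'|^2$ entries of $P$), and when the QKP profits are non-negative it is monotone, so the reduction lands inside the monotone class the paper's algorithms target. The main thing to state carefully, and the only genuine obstacle, is the \emph{exact} equality of optima and of solution values, since any slack introduced there would degrade the transferred ratio; once that equality is nailed down, the implication is immediate.
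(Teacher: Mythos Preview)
Your proposal is correct and follows essentially the same route as the paper: both reuse the value-preserving reduction from Theorem~\ref{th:npcomplete}, observe that feasible sets and objective values coincide so that the optima agree, and conclude that an $\alpha$-approximation for GMFA on the image instance reads off directly as an $\alpha$-approximation for QKP. Your write-up is a bit more explicit about value-preservation and the polynomial-time translation in each direction, but the core argument is identical.
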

\begin{proof}
Suppose there is an approximate algorithm with a constant approximate ratio $\alpha$ for GMFA. Let $app$ be the attribute combination returned by the approximate algorithm and $opt$ be the optimal solution. Since the approximate ratio is $\alpha$:
$$gain(opt)\leq \alpha gain(app)$$
Based on the mapping provided in the proof of Theorem~\ref{th:npcomplete}, we first show that $opt$ is the corresponding set of items for $opt$ in the optimal solution QKP. 
If there is a set of items for which the profit is higher than $opt$, due to the choice of $gain(.)$ in the mapping, its corresponding attribute combination in GMFA has a higher $gain$ than $opt$, which contradicts the fact that $opt$ is the optimal solution of GMFA.
Now since $gain(opt)\leq \alpha gain(app)$:
\begin{align}
\nonumber
&\sum_{\forall A_i\in opt} P_i + \sum_{\forall A_i\in opt}\sum_{\forall A_j\neq A_i\in opt} P_{i,j} \leq\\
&\nonumber \alpha \big( \sum_{\forall A_i\in app} P_i +  \sum_{\forall A_i\in app}\sum_{\forall A_j\neq A_i\in app} P_{i,j} \big)
\end{align}
Thus, the profit of the optimal set of items ($opt$) is at most $\alpha$ times the profit of the set of items ($app$) returned by the approximate algorithm, giving the approximate ratio of $\alpha$ for the quadratic knapsack problem.
\end{proof}
\section{Exact Solution}\label{sec:exact}
Considering the negative result of Theorem~\ref{th:noapprox}, we turn our attention to the design of an exact algorithm for the GMFA problem; even though this algorithm will be exponential in the worst case, we will demonstrate that is efficient in practice.  
In this section, our focus is on providing a solution for GMFA over any {\em monotonic} gain function.
A gain function $gain(.)$ is monotonic, if given two set of attributes $\mathcal{A}_i$ and $\mathcal{A}_j$ where $\mathcal{A}_j\subset\mathcal{A}_i$, $gain(\mathcal{A}_j,\mathcal{D}))\leq gain(\mathcal{A}_i,\mathcal{D}))$.
As a result, {\em this section provides a general solution that works for any monotonic gain function}, no matter 
how and based on what data it is designed.
In fact considering a non-monotonic function for gain is not reasonable here, because adding more attributes to a tuple (service) should not decrease the gain.
For ease of explanation, we first provide the following definitions and notations. Then we discuss an initial solution in \S~\ref{subsec:general-warmup}, which leads to our final algorithm in \S~\ref{sec:generalsolution}.

\vspace{-0.05in}
\begin{definition}{{\bf Lattice of Attribute Combination:}} \label{def:lattice}
Given an attribute combination $\mathcal{A}_i$, the lattice of $\mathcal{A}_i$ is defined as $\mathcal{L}_{\mathcal{A}_i} = (V,E)$,
where the nodeset $V$, depicted as $V(\mathcal{L}_{\mathcal{A}_i})$, corresponds to the set of all subsets of $\mathcal{A}_i$;
thus $\forall \mathcal{A}_j\subseteq \mathcal{A}_i$, there exists a one to one mapping between each $v_j\in V$ and each $\mathcal{A}_j$.
Each node $v_j$ is associated with a $\small{\textsf{bit representative}}$ $\mathcal{B}(v_j)$ of length $m$ in which bit $k$ is $1$ if $A_k\in \mathcal{A}_j$ and $0$ otherwise.
For consistency, for each node $v_j$ in $V(\mathcal{L}_{\mathcal{A}_i})$, the index $j$ is the decimal value of $\mathcal{B}(v_j)$.
Given the bit representative $\mathcal{B}(v_j)$ we define function $v(\mathcal{B}(v_j))$ to return $v_j$.
In the lattice an edge $\langle v_j,v_k \rangle \in E$ exists if $\mathcal{A}_k \subset \mathcal{A}_j$ and $\mathcal{B}(v_k)$, $\mathcal{B}(v_j)$ differ in only one bit. Thus, $v_j$ (resp. $v_k$) is parent (resp. child) of $v_k$ (resp. $v_j$) in the lattice.
For each node $v_j\in V$, $\small{\textsf{level}}$ of $v_j$, denoted by $\ell(v_j)$, is defined as the number of 1's in the bit representative of $v_j$.
In addition, every node $v_j$ is associated with a cost defined as $cost(v_j) = \sum\limits_{\forall A_k \in \mathcal{A}_j}cost[A_k]$.
\end{definition}

\vspace{-0.08in}
\begin{definition}{{\bf Maximal Affordable Node:}}\label{def:maximalaffordable}
A node $v_i\in V(\mathcal{L}_\mathcal{A})$ is affordable iff $\sum\limits_{\forall A_k\in \mathcal{A}} cost$ $[A_k ]\,\leq B$; otherwise it is unaffordable.
An affordable node $v_i$ is maximal affordable iff $\forall$ nodes $v_j$ in parents of $v_i$, $v_j$ is unaffordable. 
\end{definition}

\noindent
{\bf Example 1:} As a running example throughout the paper, 
consider $\mathcal{D}$ as shown in Figure~\ref{fig:sampleset}, defined over the set of attributes $A=\{A_1$:\texttt{Breakfast}, $A_2$:\texttt{TV}, $A_3$:\texttt{Internet}, $A_4$:\texttt{Washer}$\}$ with cost to provide these attributes as $cost=[1000, 300, 250, 700]$. Assume the budget is $B=1300$ and that the property $t$ does not offer these attributes/amenities, i.e., $\mathcal{A}_t=\emptyset$.\\ Figure~\ref{fig:l4} presents $\mathcal{L}_\mathcal{A}$ over these four attributes. 
The bit representative for the highlighted node $v_{10}$ in the figure is $\mathcal{B}(v_{10})=1010$ representing the set of attributes
$\mathcal{A}_{10}=\{A_1$:\texttt{Breakfast}, $A_3$: \texttt{Internet}$\}$;
The level of $v_{10}$ is $\ell(v_{10})=2$, and it is the parent of nodes $v_2$ and $v_8$ with the bit representatives $0010$ and $1000$.
Since $B=1300$ and the cost of $v_2$ is $cost(v_{2}) = 250$, $v_{2}$ is an affordable node; however, since its parent $v_{10}$ the cost $cost(v_{10}) = 1250$ and is affordable, $v_{2}$ is not a maximal affordable node. $v_{11}$ and $v_{14}$ with bit representatives $\mathcal{B}(v_{11})=1011$ and $\mathcal{B}(v_{14})=1110$, the parents of $v_{10}$, are unaffordable; thus $v_{10}$ is a maximal affordable node.

\begin{figure}[t]
\centering
\includegraphics[width=0.49\textwidth]{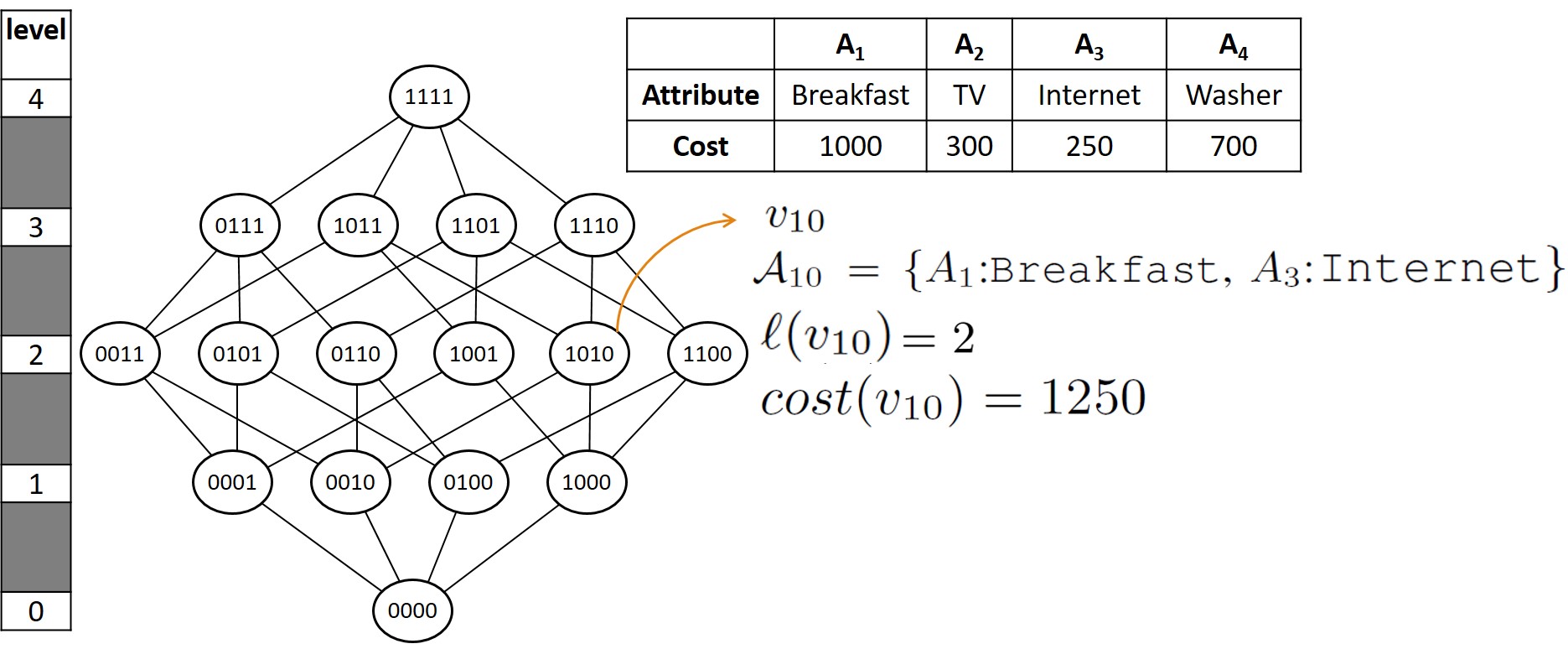}
\caption{Illustration of $\mathcal{L}_\mathcal{A}$ for Example 1} \label{fig:l4}
\end{figure}

A baseline approach for the GMFA problem is to examine all the $2^m$ nodes of $\mathcal{L}_\mathcal{A}$.
Since for every node the algorithm needs to compute the gain, it's running time is in $\Omega(m2^m\mathcal{G})$,
where $\mathcal{G}$ is the computation cost associated with the $gain(.)$ function.

As a first algorithm, we improve upon this baseline by leveraging the monotonicity of the gain function, which enables us to prune some of the branches in the lattice while searching for the optimal solution. This algorithm is described in the next subsection as improved GMFA (I-GMFA). Then, we discuss drawbacks and propose a general algorithm for the GMFA problem in Section~\ref{sec:generalsolution}.

\subsection{I-GMFA}\label{subsec:general-warmup}
An algorithm for GMFA can identify the maximal affordable nodes and return the one with the maximum gain.
Given a node $v_i$, due to the monotonicity of the gain function, for any child $v_j$ of $v_i$, $gain(\mathcal{A}_i)\geq gain(\mathcal{A}_j)$.
Consequently, when searching for an affordable node that maximizes gain, one can ignore all the nodes that are not maximal affordable.
Thus, our goal is to efficiently identify the maximal affordable nodes, while pruning all the nodes in their sublattices.
Algorithm~\ref{alg:warmup} presents a top-down\footnote{\small{One could design a bottom-up algorithm that starts from $v_0$ and keeps ascending the lattice, in BFS manner and stop at the maximal affordable nodes. We did not include it due to its similarity to the top-down approach.}} BFS (breadth first search) traversal of $\mathcal{L}_{\mathcal{A}\backslash \mathcal{A}_t}$ starting from the root of the lattice (i.e., $v(\mathcal{A}\backslash \mathcal{A}_t)$).
To determine whether a node should be pruned or not, the algorithm checks if the node has an affordable parent, and if so, prunes it.
In Example 1, since $cost(v_7)$ ($\mathcal{A}_7=\{A_2$:\texttt{TV}, $A_3$:\texttt{Internet}, $A_4$:\texttt{Washer}$\}$) is $1250 < B$ (and it does not have any affordable parents), $v_7$ is a maximal affordable node; thus the algorithm prunes the sublattice under it.
For the nodes with cost more than $B$, the algorithm generates their children and if not already in the queue, adds them (lines 14 to 16).

\begin{algorithm}[h!]
\caption{{\bf  I-GMFA} \\
         {\bf Input:} Database $\mathcal{D}$ with attributes $\mathcal{A}$, Budget $B$, and Tuple $t$
        }
\begin{algorithmic}[1]
\label{alg:warmup}
\STATE feasible $= \{\}$
\STATE maxg$=0$, best$=$Null
\STATE Enqueue$(v(\mathcal{A}\backslash \mathcal{A}_t))$
\WHILE {$queue$ is not empty}
    \STATE  {$v =$ Dequeue$()$}
    \STATE $g = gain(\mathcal{A}_t\cup \mathcal{A}_v,\mathcal{D})$
    \IF{$g\leq$ maxg {\bf or} $\exists p\in$ parents of $v$ s.t. $p\in $feasible}
        \STATE {\bf continue}
    \ENDIF
    \IF{$ cost(v)\,\leq B$}
        \STATE feasible =feasible $\cup \{ v \}$
        \STATE maxg$=g$, best$=\mathcal{A}_v$
    \ELSE
        \FOR{$v_j$ in children of $v$}
            \STATE {\bf if} $v_j$ is not in $queue$ {\bf then} Enqueue$(v_j)$ 
        \ENDFOR
    \ENDIF
\ENDWHILE
\STATE {\bf return} (best,maxg)
\end{algorithmic}
\end{algorithm}

Algorithm~\ref{alg:warmup} is in $\Omega(\mathcal{G})$, as in some cases after checking a constant number of nodes (especially when the root itself is affordable) it prunes the rest of the lattice. 
Next, we discuss the drawbacks of Algorithm~\ref{alg:warmup} which lead us to a general solution.

\subsection{General GMFA Solution}\label{sec:generalsolution}
Algorithm~\ref{alg:warmup} conducts a BFS over the lattice of attribute combinations which make the time complexity of the algorithm dependent on the number of edges of the lattice. For every node in the lattice, Algorithm~\ref{alg:warmup} may generate all of its children and parents. Yet for every generated child (resp. parent), the algorithm checks if the queue (resp. feasible set) contains it. Moreover, it stores the set of feasible nodes to determine the feasibility of their parents. In this section, we discuss these drawbacks in detail and propose a general approach as algorithm {\bf G-GFMA}.

\subsubsection{The problem with multiple children generation}
\label{sec:multipleChild}
Algorithm~\ref{alg:warmup} generates all children of an unaffordable node. Thus, if a node has multiple unaffordable parents,
Algorithm~\ref{alg:warmup} will generate the children multiple times, even though they will be added to the queue once.
In Figure~\ref{fig:l4}, node $v_9$ ($\mathcal{A}_9 = \{A_1$:\texttt{Breakfast}, $A_4$:\texttt{Washer}$\}$ and $\mathcal{B}(v_9) = 1001$)) will be generated twice by the unaffordable parents $v_{11}$ and $v_{13}$ with the bit representatives 
$\mathcal{B}(v_{11})=1011$ and $\mathcal{B}(v_{13})=1101$;
the children will be added to the queue once, while subsequent attempts for adding them will be ignored as they are already in the queue.

A natural question is whether it is possible  to design a strategy that for each level $i$ in the lattice, (i) make sure we generate all the non-pruned children and (ii) guarantee to generate children of each node only once.

\noindent
\textbf{Tree construction:}
To address this, we adopt the one-to-all broadcast algorithm in a hypercube~\cite{bertsekas1991optimal} constructing a tree that guarantees to generate each node in $\mathcal{L}_\mathcal{A}$ only once.
As a result, since the generation of each node is unique, further checks before adding the node to the queue are not required.
The algorithm works as following: 
Considering the bit representation of a node $v_i$,
let $\rho(\mathcal{B}(v_i))$ be the right-most $0$ in $\mathcal{B}(v_i)$.
The algorithm first identifies $\rho(\mathcal{B}(v_i))$;
then it complements the bits in the right side of $\rho(\mathcal{B}(v_i))$ one by one to generate the children of $v_i$.
Figure~\ref{fig:l4t} demonstrates the resulting tree for the lattice of Figure~\ref{fig:l4} for this algorithm.
For example, consider the node $v_3$ ($\mathcal{B}(v_3)=0011$) in the figure; $\rho(0011)$ is $2$ (for attribute $A_2$).
Thus, nodes $v_1$ and $v_2$ with the bit representatives $\mathcal{B}(v_1)=0001$ and $\mathcal{B}(v_2)=0010$ are generated as its children.

\begin{figure*}[!ht]
	\hspace{-5mm}
	\begin{minipage}[t]{0.3\linewidth}
		\centering
		\includegraphics[width=0.7\textwidth]{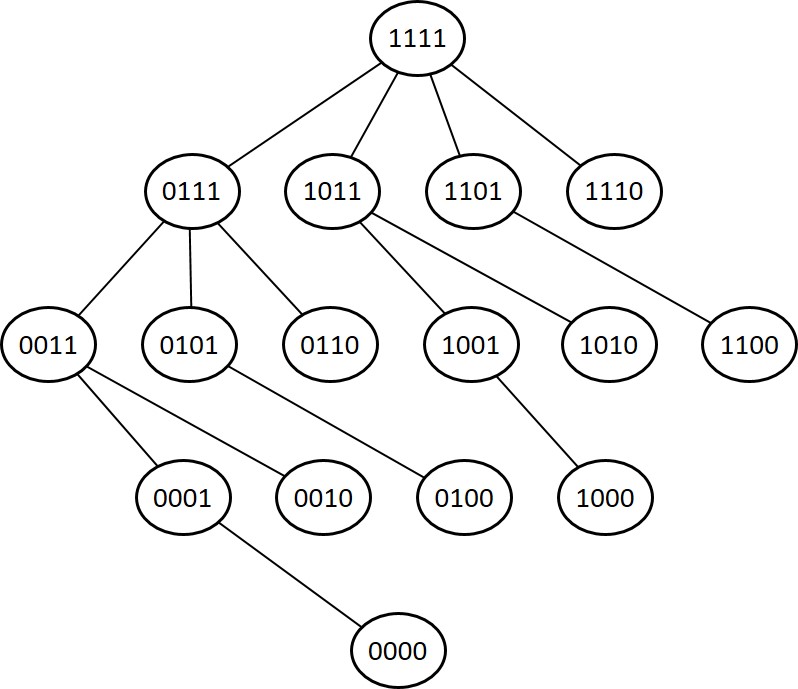}
		\caption{Illustration of tree construction for Figure~\ref{fig:l4}}
		\label{fig:l4t}
	\end{minipage}
	\hspace{1mm}
	\begin{minipage}[t]{0.3\linewidth}
		\centering
		\includegraphics[width=0.95\textwidth]{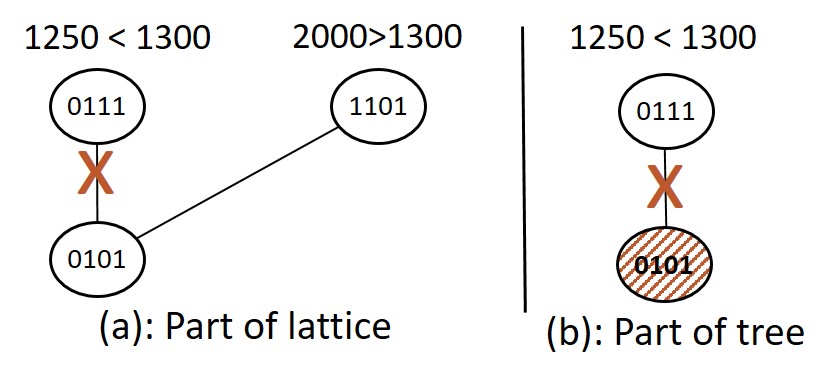}
		\caption{An Example of pruning in Lattice v.s. Tree}
		\label{fig:lfact1}
	\end{minipage}	
	\hspace{1mm}
		\begin{minipage}[t]{0.3\linewidth}
		\centering
		\vspace{-23.5mm}\includegraphics[width=1.2\textwidth]{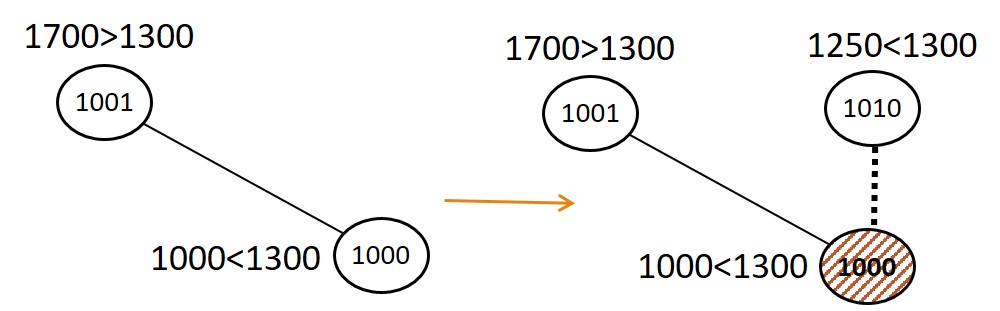}
		\vspace{-1 mm}
		\caption{Checking all parents in lattice to decide if a node is maximal affordable.}
		\label{fig:lfact2}
	\end{minipage}
\end{figure*}


As shown in Figure~\ref{fig:l4t}, i) the children of a node are generated once, and ii) 
all nodes in $\mathcal{L}_\mathcal{A}$ are generated; that is because every node $v_i$ has one (and only one) parent in the tree structure, identified by flipping the bit $\rho(\mathcal{B}(v_i))$ in $\mathcal{B}(v_i)$ to one.
We use parent$_T(v_i)$ to refer to the parent of the node $v_i$ in the tree structure.
For example, for $v_3$ in Figure~\ref{fig:l4}, since $\rho(0101)$ is $3$, its parent in the tree is parent$_T(v_3) = v_7$ ($\mathcal{B}(v_7)=0111$).
Also, note that in order to identify $\rho(\mathcal{B}(v_i))$ there is no need to search in $\mathcal{B}(v_i)$ to identify it.
Based on the way $v_i$ is constructed, $\rho(\mathcal{B}(v_i))$ is the bit that has been flipped by its parent to generate it. 

Thus, by transforming the lattice to a tree, some of the nodes that could be generated by Algorithm~\ref{alg:warmup}, will not be generated in the tree.
Figure~\ref{fig:lfact1} represents an example where the node $v(0101)$ will be generated in the lattice (Figure~\ref{fig:lfact1}(a)) but it will be immediately pruned in the tree (Figure~\ref{fig:lfact1}(b)).
In the lattice, node $v(0101)$ will be generated by the unaffordable parent $v(1101)$, whereas in the tree, Figure~\ref{fig:lfact1}(b), will be pruned, as parent$_T(v(0101)) = v(0111)$ is affordable.
According to Definition~\ref{def:maximalaffordable}, a node $v_i$ that has at least one affordable parent is not maximal affordable.
Since in these cases parent$_T(v_i)$ is affordable, there is no need to generate them.
Note that we only present one step of running Algorithm~\ref{alg:warmup} in the lattice and the tree. Even though node $v(0101)$ is generated (and added to the queue) in Algorithm~\ref{alg:warmup}, it will be pruned in the next step (lines 7 to 9 in Algorithm~\ref{alg:warmup}).

%

\subsubsection{The problem with checking all parents in the lattice}
The problem with multiple generations of children has been resolved by constructing a tree. The pruning strategy is to stop traversing a node in the lattice when a node has at least one affordable parent. I
n Figure~\ref{fig:lfact1}, if for a node $v_i$, parent$_T(v_i)$ is affordable, $v_i$ will not be generated.
However, this does not imply that if a $v_i$ is generated in the tree it does not have an affordable parent in the lattice.
For example, consider $v_{8}$ ($\mathcal{B}(v_{8}) = 1000$ and $\mathcal{A}_{8} = \{A_1$:\texttt{Breakfast}$\}$) in Figure~\ref{fig:l4t}.
We enlarge that part of the tree in Figure~\ref{fig:lfact2}.
As presented in Figure~\ref{fig:lfact2}, parent$_T(v_{8}) = 1001$ is unaffordable and thus $v_8$ is generated in the tree.
However, by consulting the lattice, $v_{8}$ has the affordable parent $v_{10}=v(1010)$ ($\mathcal{A}_{14}=\{A_1$:\texttt{Breakfast}, $A_2$:\texttt{Internet}$\}$); thus $v_{8}$ is not maximal affordable.


In order to decide if an affordable node is maximal affordable, one has to check all its parents in the lattice (not the tree). If at least one of its parents in the lattice is affordable, it is not maximal affordable. Thus, even though we construct a tree to avoid generating the children multiple times, we may end up checking all edges in the lattice since we have to check the affordability of all parents of a node in the lattice. To tackle this problem we exploit the monotonicity of the cost function to construct the tree such that for a given node we only check the affordability of the node's parent in the tree (not the lattice).  

In the lattice, each child has one less attribute than its parents. Thus, for a node $v_i$, one can simply determine the parent with the minimum cost (cheapest parent) by considering the cheapest attribute in $\mathcal{A}$ that does not belong to $\mathcal{A}_i$.
In Figure~\ref{fig:lfact2}, the cheapest parent of $v_{8} = v(1000)$ is $v_{10}=v(1010)$ because $A_3:$\texttt{Internet} is the cheapest missing attribute in $\mathcal{A}_{8}$.
The key observation is that, for a node $v_i$, if the parent with minimum cost is not affordable, none of the other parents is affordable; on the other hand, if the cheapest parent is affordable, there is no need to check the other parents as this node is not maximal affordable.
In the same example, $v_{8}$ is not maximal affordable since its cheapest parent $v_{10}$ has a cost less than the budget, i.e. ($1250 < 1300$).

Consequently, one only has to identify the least cost missing attribute and check if its cost plus the cost of attributes in the combination
is at most $B$. Identifying the missing attribute with the smallest cost is in $O(m)$.
For each node $v_i$, $\rho(\mathcal{B}(v_i))$ is the bit that has been flipped by parent$_T(v_i)$ to generate it.
For example, consider $v_1 = v(0001)$ in Figure~\ref{fig:l4t}; since $\rho(0001)=3$, parent$_T(v_1)=v(0011)$.
We can use this information to reorder the attributes and instantly get the cheapest missing attribute in $\mathcal{A}_i$.
The key idea is that if we originally {\em order the attributes from the most expensive to the cheapest}, 
$\rho(\mathcal{B}(v_i))$ is the index of the cheapest attribute.
Moreover, adding the cheapest missing attribute generates parent$_T(v_i)$.
Therefore, if the attributes are sorted on their cost in descending order, a node with an affordable parent in the lattice will never be generated
in the tree data structure.
Consequently, after presorting the attributes, there is no need to check if a node in the queue has an affordable parent.

Sorting the attributes is in $O(m\log (m))$.
In addition, computing the cost of a node is thus performed in constant time, using the cost of its parent in the tree.
For each node $v_i$, $cost(v_i)$ is $cost(\mbox{parent}_T(v_i))-cost[A_{\rho(\mathcal{B}(v_i))}]$.
Applying all these ideas the final algorithm is in $\Omega\big(\max(m\log (m),\mathcal{G})\big)$ and $O(2^m\mathcal{G})$.
Algorithm~\ref{alg:general} presents the pseudo-code of our final approach, {\bf G-GMFA}.

\begin{algorithm}[!h]
\caption{{\bf G-GMFA} \\
         {\bf Input:} Database $\mathcal{D}$ with attributes $\mathcal{A}$, Budget $B$, and Tuple $t$
        }
\begin{algorithmic}[1]
\label{alg:general}
\STATE $\mathcal{A}' = \mathcal{A}\backslash \mathcal{A}_t\,$; $m' = len(\mathcal{A}')$
\STATE sort $\mathcal{A}'$ descendingly on cost;
\STATE maxg $= 0$
\STATE Enqueue$\big( v(\mathcal{A}'),-1,\overset{m'-1}{\underset{i=0}{\sum}}cost[\mathcal{A}'[i]])$
\WHILE {$queue$ is not empty}
    \STATE  {$(v,i,\sigma )=$ Dequeue$()$}
    \IF{$\sigma\leq B$}
        \STATE $g = gain(\mathcal{A}_t\cup \mathcal{A}_v,\mathcal{D})$ \label{line:G-Gmfa-gain}
        \STATE {\bf if} $g>$maxg {\bf then } maxg $= g\,$; best$=\mathcal{A}_v$
    \ELSE 
        \FOR{$j$ from $(i+1)$ to $(m'-1)$}
            \STATE Enqueue$\big(v(\mathcal{A}_v\backslash \mathcal{A}'[j]),j,\sigma - cost[\mathcal{A}'[j]] \big)$
        \ENDFOR
    \ENDIF
\ENDWHILE
\STATE {\bf return} (best,maxg)
\end{algorithmic}
\end{algorithm}

\section{Gain Function Design} \label{sec:gain}
As the main contribution of this paper, in \S~\ref{sec:exact}, we
proposed a general solution that works for any arbitrary monotonic gain function.
We conducted our presentation for a generic gain function because the design of the gain function is application specific and depends on the available information. 
The application specific nature of the gain function design, motivated the consideration of the generic gain function, instead of promoting a specific function.
Consequently, applying any monotonic gain function in Algorithm~\ref{alg:general} is as simple as calling it in line~\ref{line:G-Gmfa-gain}.

In our work, the focus is on understanding which subsets of attributes are attractive to users.
Based on the application, in addition to the dataset $\mathcal{D}$, some extra information (such as query logs and user ratings) may be available that help in understanding the desire for combinations of attributes and could be a basis for the design of such a function.
However, such comprehensive information that reflect user preferences are not always available.
Consider a third party service for assisting the service providers. Such services have a limited view of the data~\cite{queryreranking} and may only have access to the dataset tuples. An example of such third party services is AirDNA~\footnote{\small{www.airdna.co}} which is built on top of AirBnB.
Therefore, instead on focusing on a specific application and assuming the existence of extra information, in the rest of this section, we focus on a simple, yet compelling variant of a practical gain function that only utilizes
the existing tuples in the dataset to define the notion of gain in the absence of other sources of information.
We provide a general discussion of gain functions with extra information in Appendix~\ref{subsec:dis-userpref}.

\subsection{Frequent-item Based Count (FBC)}

In this section, we propose a practical gain function that only utilizes the existing tuples in the dataset. It hinges on the observations that the bulk of market participants are expected to behave rationally. Thus, goods on offer are expected to follow
a basic supply and demand principle.
For example, based on the case study provided in \S~\ref{subsec:casestudy}, while many of the properties in Paris offer \texttt{Breakfast}, offering it is not popular in New York City. This indicates a relatively high demand for such an amenity in Paris and relatively low demand in New York City.
As another example, there are many accommodations that provide \texttt{washer}, \texttt{dryer}, and \texttt{iron} together; 
providing \texttt{dryer} without a \texttt{washer} and \texttt{iron} is rare. This reveals a need for the combination of these
 attributes.
Utilizing this intuition, we define a frequent node in $\mathcal{L}_\mathcal{A}$ as follows:

\begin{definition}{{\bf Frequent Node:}}\label{def:fqc}
Given a dataset $\mathcal{D}$, and a threshold $\tau\in (0,1]$, a node $v_i\in V(\mathcal{L}_\mathcal{A})$ is frequent if and only if the number of tuples in $\mathcal{D}$ containing the attributes $\mathcal{A}_i$ is at least 
$\tau$ times $n$, i.e., $|Q(v_i,\mathcal{D})|\geq \tau n$.\footnote{For simplicity, we use $Q(v,\mathcal{D})$ to refer to $Q(\mathcal{A}(v),\mathcal{D})$.}
\end{definition}

For instance, in Example 1 let $\tau$ be $0.3$. In Figure~\ref{fig:sampleset}, $v_3=v(0011)$ is frequent because Accom. 2, Accom. 5, and Accom. 9 contain the attributes $\mathcal{A}_3 = \{A_3$:\texttt{Internet}, $A_4$:\texttt{Washer} $\}$; thus $|Q(v_3,\mathcal{D})|=3\geq 0.3\times 10$. However, since $|Q(v_{11} = v(1011)),\mathcal{D})|$ is $1 < 0.3\times 10$, 
$v_{11}$ is not frequent. The set of additional notation, utilized in \S~\ref{sec:gain} is provided in Table~\ref{table:notations2}.

Consider a tuple $t$ and a set of attributes
$A'\subseteq \mathcal{A}\backslash \mathcal{A}_t$ to be added to $t$.
Let $\mathcal{A}_i$ be $\mathcal{A}_t\cup\mathcal{A}'$ and $v_i$ be $v(\mathcal{A}_i)$.
After adding $\mathcal{A}'$ to $t$, for any node $v_j$ in $\mathcal{L}_{\mathcal{A}_i}$, $t$ belongs to $Q(v_j,\mathcal{D})$.
However, according to Definition~\ref{def:fqc}, only the frequent nodes in $\mathcal{L}_{\mathcal{A}_i}$ are desirable.
Using this intuition, Definition~\ref{def:FBC} provides a practical gain function utilizing nothing but the tuples in the dataset.
 
\begin{definition}{{\bf Frequent-item Based Count (FBC):}}\label{def:FBC}
Given a dataset $\mathcal{D}$, and a node $v_i\in V(\mathcal{L}_\mathcal{A})$, the Frequent-item Based Count (FBC) of $v_i$ is the number of
frequent nodes in $\mathcal{L}_{\mathcal{A}_i}$. Formally
\begin{align}
\mbox{FBC}_\tau (\mathcal{B}(v_i),\mathcal{D}) = |\{v_j\in \mathcal{L}_{\mathcal{A}_i} \, |\, |Q(v_j,\mathcal{D})|\geq \tau \}|
\end{align}
\end{definition}
For simplicity, throughout the paper we use FBC($\mathcal{B}(v_i)$) to refer to FBC$_\tau (\mathcal{B}(v_i),\mathcal{D})$.
In Example 1, consider $v_{15}=v(1111)$. In Figure~\ref{fig:l4intersect}, we have colored the frequent nodes in $\mathcal{L}_{\mathcal{A}_{15}}$. Counting the number of colored nodes in Figure~\ref{fig:l4intersect}, FBC($1111$) is $13$.

Such a definition of a gain function has several advantages, mainly (i) it requires
knowledge only of the existing tuples in the dataset (ii) it naturally
captures changes in the joint demand for certain attribute combinations (iii) it
is robust and adaptive to the underlying data changes.
However, it is known that~\cite{gunopulos2003discovering}, counting the number of frequent itemsets is \#P-complete.
Consequently, counting the number of frequent subsets of a subset of attributes (i.e., counting the number of frequent nodes in $\mathcal{L}_{\mathcal{A}_i}$) is exponential to the size of the subset (i.e., the size of $\mathcal{L}_{\mathcal{A}_i}$).
Therefore, for this gain function, even the verification version of GMFA is likely not solvable in polynomial time.

Thus, in the rest of this section, we design a practical output sensitive algorithm for computing FBC($\mathcal{B}(v_i)$).
Next, we discuss an observation that leads to an effective computation of FBC
and a negative result against such a design.

\begin{table}[!t]
\center
\footnotesize
\vspace{5mm}
\caption{Table of additional notations for \S~\ref{sec:gain}}
\begin{tabular}{|@{}c@{}|@{}c@{}|}
\hline
{\bf Notation}& {\bf Meaning}\\ \hline
$\tau$& The frequency threshold \\ \hline 
$Q(v_i,\mathcal{D})$& The set of tuples in $\mathcal{D}$ that contain the attributes $\mathcal{A}_i$\\ \hline
FBC$(\mathcal{B}(v_i))$& The frequent-item based count of  $\mathcal{B}(v_i)$\\ \hline 
$\mathcal{F}_{v_i}$& The set of maximal frequent nodes in $\mathcal{L}_{\mathcal{A}_i}$\\ \hline
$\mathcal{U}_{v_i}$& The set of frequent nodes in $\mathcal{L}_{\mathcal{A}_i}$\\ \hline
$P$& The pattern $P$ that is a string of size $m$ \\ \hline 
COV$(P)$& The set of nodes covered by the pattern $P$\\ \hline
$k_x(P)$& Number of $X$s in $P$\\ \hline
$G_j$& The bipartite graph of the node $v_j$\\ \hline 
$\mathcal{V}(G_j)$& The set of nodes of $G_j$\\ \hline 
$\mathcal{E}(G_j)$& The set of edges of $G_j$\\ \hline 
$\xi_{k,l}\in \mathcal{E}(G_j)$& The edge from the node $A_k$ to $P_l$ in $\mathcal{E}(G_j)$\\ \hline 
$\xi_j^k$& The adjacent nodes to the $A_k$ in $G_j$\\ \hline 
$\delta_k$& The binary vector showing the nodes $v_j$ where $A_k\in\mathcal{A}_j$\\ \hline  
$\alpha(P_j,G_j)$& Part of $\mathcal{U}_{v_i}$ assigned to $P_j$ while applying Rule 1 on $G_j$\\ \hline 
\end{tabular}
\label{table:notations2}
\end{table}

\subsection{FBC computation -- Initial Observations}\label{subsec:4-1}
Given a node $v_i$, to identify FBC$(\mathcal{B}(v_i))$, the baseline solution traverses the lattice under  $v_i$, i.e.,$\mathcal{L}_{\mathcal{A}_i}$
counting the number of nodes in which more than $\tau$ tuples in the dataset match
the attributes corresponding to $v_i$. Thus, this baseline is always in 
$\theta(n2^{\ell(v_i)})$. 
An improved method to compute FBC of $v_i$, is to start from the bottom of the lattice
$\mathcal{L}_{\mathcal{A}_i}$ and follow the well-known Apriori\cite{apriori} algorithm
discovering the number of frequent nodes. This algorithm utilizes the fact that
any superset of an infrequent node is also infrequent.
The algorithm combines pairs of frequent nodes at level $k$ that share $k-1$ attributes,
to generate the candidate nodes at level $k+1$. It then checks the frequency of candidate pairs at level $k+1$ to identify the frequent nodes of size $k+1$ and
continues until no more candidates are generated. 
Since generating the candidate nodes at level $k+1$ contains combining the frequent nodes at level $k$, this algorithms is in $O(n.$FBC$(\mathcal{B}(v_i))^2)$.

Consider a node $v_i$ which is frequent. In this case, Apriori will generate all the $2^{\ell(v_i)}$ frequent nodes, 
i.e., in par with the baseline solution.
One interesting observation is that if $v_i$ itself is frequent, since all nodes in $\mathcal{L}_{\mathcal{A}_i}$ are also frequent, FBC$(\mathcal{B}(v_i))$ is $2^{\ell(v_i)}$. As a result, in such cases, FBC can be computed in constant time. In Example 1,
since node $v_7$ with bit representative $\mathcal{B}(v_7) = 0111$ is frequent FBC$(0111) = 2^3 = 8$ ($\ell(v_7) = 3$).

This motivates us to compute the number of frequent nodes in a lattice without generating all the nodes.
First we define the set of maximal frequent nodes as follows:
\begin{definition}{{\bf Set of Maximal Frequent Nodes:}} \label{def:maxFreq}
Given a node $v_i$, dataset $\mathcal{D}$, and a threshold $\tau\in (0,1]$,
the set of maximal frequent nodes is the set of frequent nodes in $V(\mathcal{L}_{\mathcal{A}_i})$ that do not have a frequent parent. Formally,
\begin{align}
\mathcal{F}_{v_i}(\tau,\mathcal{D}) =& \{v_j \in V(\mathcal{L}_{\mathcal{A}_i}) |\, |Q(v_j,\mathcal{D})|\geq \tau n \mbox{ and } \\
                                     & \nonumber\forall v_k\in \mbox{parents}(v_j, \mathcal{L}_{\mathcal{A}_i}):\, |Q(v_k,\mathcal{D})|< \tau n\}
\end{align}
\end{definition}
In the rest of the paper, we ease the notation $\mathcal{F}_{v_i}(\tau,\mathcal{D})$ with  $\mathcal{F}_{v_i}$. In Example 1, the set of maximal frequent nodes of $v_{15}$ with bit representative $\mathcal{B}(v_{15}) = 1111$ is $\mathcal{F}_{v_{15}} = \{ v_7, v_{10}, v_{14}\}$, where $\mathcal{B}(v_{7}) = 0111$, $\mathcal{B}(v_{10}) = 1001$, and $\mathcal{B}(v_{14}) = 1110$.

Unfortunately,
unlike the cases where $v_i$ itself is frequent, calculating the FBC of infrequent nodes is challenging. That is because the intersections between the frequent nodes in the sublattices of $\mathcal{F}_{v_i}$ are not empty. 
Due to the space limitations, please find further details on this negative result in Appendix~\ref{sec:fibNeg}.
Therefore, in \S~\ref{sec:4-3}, we propose an algorithm that breaks the frequent nodes in the sublattices of $\mathcal{F}_{v_i}$ into disjoint partitions.

\begin{figure*}[ht]
    \begin{minipage}[t]{0.31\linewidth}
        \centering
        \includegraphics[width=0.7\textwidth]{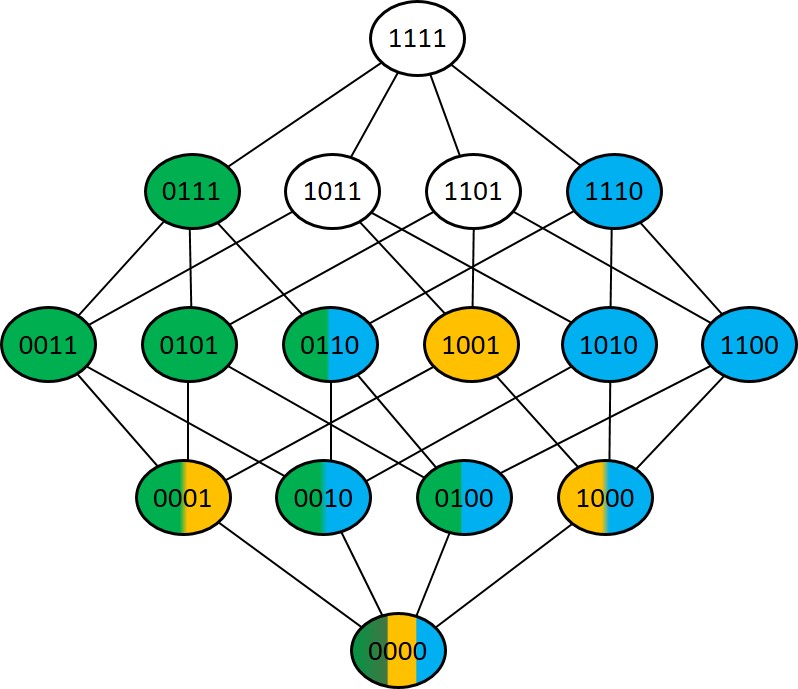}
        \caption{Illustration of the intersection between the children of $C=\{A_1,A_2,A_4\}$}
        \label{fig:l4intersect}
    \end{minipage}
    \hspace{2mm}
    \begin{minipage}[t]{0.31\linewidth}
        \centering
        \includegraphics[width=0.7\textwidth]{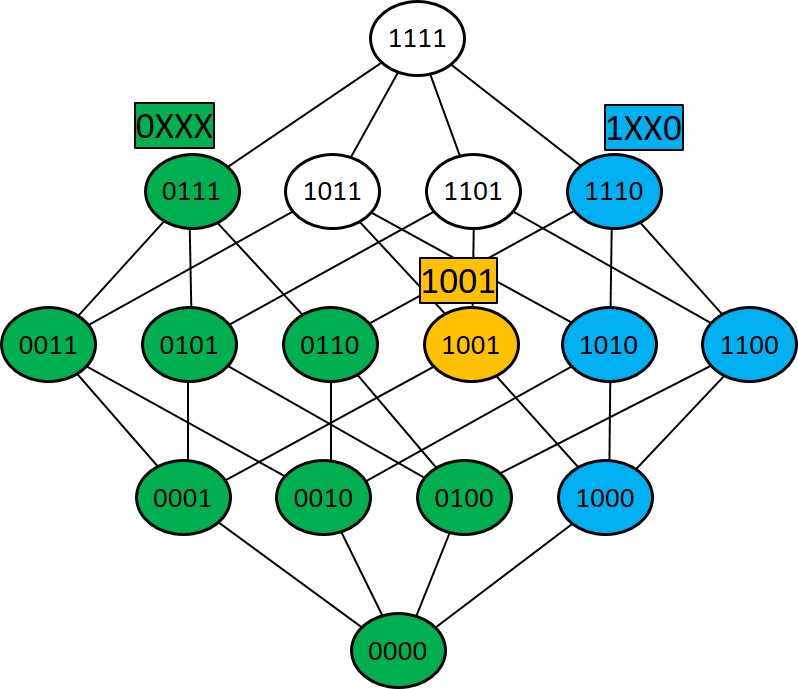}
        \caption{Illustration of sublattice coverage by the tree}
        \label{fig:l4tpartition}
    \end{minipage}
    \hspace{4mm}
    \begin{minipage}[t]{0.3\linewidth}
        \vspace{-23mm}
        \centering
        \begin{small}
		\begin{tabular}{|@{}c@{}|@{}c@{}|@{}c@{}|@{}c@{}|@{}c@{}|}
            \hline
            {\bf id}&{\bf $v_i\in \mathcal{F}_{v_i}$}&{\bf $\mathcal{B}(v_i)$}& {\bf Disjoint Patterns}& {\bf Count}\\ \hline
            $1$&$v_{7}$&$0111$& $ 0XXX$& $2^3$\\ \hline
            $2$&$v_{14}$&$1110$& $ 1XX0$& $2^2$\\ \hline
            $3$&$v_{9}$&$1001$& $ 1\, 0\, 0\, 1$& $2^0$\\ \hline
          \end{tabular}
        \end{small}
        \vspace{8.5mm}
        \caption{FBC($1111$) $=8+4+1=13$}
        \label{fig:l4FBC}
    \end{minipage}
\end{figure*}
\subsection{A negative result on computing FBC using $\mathcal{F}_{v_i}$}\label{sec:fibNeg}
As discussed in \S~\ref{subsec:4-1}, if a node $v_i$ is frequent, all the nodes in $\mathcal{L}_{\mathcal{A}_i}$ are also frequent, FBC($\mathcal{B}(v_i)$) is $2^{\ell(v_i)}$. 
Considering this, given $v_i$, suppose $v_j$ is a node in the set of maximal frequent nodes of $v_i$ (i.e., $v_j\in \mathcal{F}_{v_i}$); the FBC of any node $v_k$ where $\mathcal{A}_k\subseteq \mathcal{A}_j$ can be simply calculated as FBC($\mathcal{B}(v_k)$) $= 2^{\ell(v_k)}$.
In Example 1, node $v_7$ with bit representative $\mathcal{B}(v_7) = 0111$ is in $\mathcal{F}_{v_{15}}$, thus FBC$(0111) = 8$;
for the node $v_3 = v(0011)$ also, 
since $\mathcal{A}_3=\{A_3,A_4\}$ is a subset of $\mathcal{A}_7=\{A_2,A_3,A_4\}$, FBC$(0011)$ is $2^2 = 4$ ($\ell(v_3) = 2$).

Unfortunately, this does not hold for the nodes whose attributes are a superset of a maximal frequent node; calculating the FBC of those nodes is challenging.
Suppose we wish to calculate the FBC of $v_{15}=v(1111)$ in Example 1.
The set of maximal frequent nodes of $v_{15}$ is $\mathcal{F}_{v_{15}} = \{ v_7, v_{10}, v_{14}\}$, where $\mathcal{B}(v_{7}) = 0111$, $\mathcal{B}(v_{10}) = 1001$, and $\mathcal{B}(v_{14}) = 1110$. Figure~\ref{fig:l4intersect} presents the sublattice of each of the maximal frequent nodes in a different color. The nodes in 
$\mathcal{L}_{\mathcal{A}_7}$ are colored green, while the nodes in $\mathcal{L}_{\mathcal{A}_{10}}$ are orange and the nodes in $\mathcal{L}_{\mathcal{A}_{14}}$ are blue.
Several nodes in $\mathcal{L}_{\mathcal{A}_{15}}$ (including $v_{15}$ itself) are not frequent; thus FBC$(1111)$ is less than $2^4$ ($\ell(v_{15})=4$).
In fact, FBC$(1111)$ is equal to the size of the union of colored sublattices. Note that the intersection between the sublattices of the maximal frequent nodes is not empty. Thus, even though for each maximal frequent node $v_j \in \mathcal{F}_{v_i}$, the FBC$(\mathcal{B}(v_{j}))$ is $2^{\ell(v_j)}$, computing the FBC$(\mathcal{B}(v_{i}))$ is not (computationally) simple.
If we simply add the FBC of all maximal frequent nodes in $\mathcal{F}_{v_i}$, we are overestimating FBC$(\mathcal{B}(v_{i}))$, because we are counting the intersecting nodes multiple times. 

More formally, given an infrequent node $v_{i}$ with maximal frequent nodes $\mathcal{F}_{v_i}$, the FBC$(\mathcal{B}(v_{i}))$ is equal to the size of the union of the sublattices of its maximal frequent nodes which utilizing the inclusion$-$exclusion principle is provided by Equation~\ref{eq:setunion}. In this equation, $v_j \in \mathcal{F}_{v_i}$ and $\mathcal{L}_{\mathcal{A}_j}$ is the sublattice of node $v_j$.

\begin{align}\label{eq:setunion}
FBC(\mathcal{B}(v_{i})) &= \big|\underset{v_j \in \mathcal{F}_{v_i}}{\overset{}{\bigcup}} \mathcal{L}_{\mathcal{A}_j}\big| \\
&= \sum\limits_{j=1}^{|\mathcal{F}_{v_i}|} (-1)^{j+1} \Big( \sum\limits_{\forall \mathcal{F}'\subset \mathcal{F}_{v_i},|\mathcal{F}'|=j} \big|\underset{\forall v_k\in \mathcal{F}'}{\bigcap} \mathcal{A}_k\big|\Big) \nonumber
\end{align}

For example, in Figure~\ref{fig:l4intersect}:
\begin{align}
\nonumber \mbox{FBC}(1111) &= \mbox{FBC}(0111) + \mbox{FBC}(1001) + \mbox{FBC}(1110)\\
\nonumber & - \mbox{FBC}(0110)  - \mbox{FBC}(0001) - \mbox{FBC}(1000)\\
\nonumber & + \mbox{FBC}(0000) \\
\nonumber & = 2^3 + 2^2 + 2^3 - 2^2 - 2^1 - 2^1 + 2^0 = 13
\end{align}

\noindent
Computing FBC based on Equation~\ref{eq:setunion} requires to add (or subtract) $\sum\limits_{j=1}^{|\mathcal{F}_{v_i}|} {|\mathcal{F}_{v_i}| \choose j} = 2^{|\mathcal{F}_{v_i}|}-1$ terms, thus its running time is in $\theta(2^{|\mathcal{F}_{v_i}|})$.

\subsection{Computing FBC using $\mathcal{F}_{v_i}$} \label{sec:4-3}
As elaborated in \S~\ref{sec:fibNeg}, it is evident that for a given infrequent node $v_i$, the intersection between the sublattices of its maximal frequent nodes in $\mathcal{F}_{v_i}$ is not empty.
Let $\mathcal{U}_{v_i}$ be the set of all frequent nodes in $\mathcal{L}_{\mathcal{A}_i}$ -- i.e., $\mathcal{U}_{v_i} = \underset{\forall v_j \in \mathcal{F}_{v_i}}{\bigcup} V(\mathcal{L}_{\mathcal{A}_j})$. In Example 1, the $\mathcal{U}_{v_{15}}$ is a set of all colored nodes in Figure~\ref{fig:l4intersect}. 

Our goal is to {\em partition} $\mathcal{U}_{v_i}$ to a collection $\mathcal{S}$ of disjoint sets such that (i) $\underset{\forall S_i\in \mathcal{S}}{\cup}(S_i) = \mathcal{U}_{v_i}$ and (ii) the intersection of the partitions is empty, i.e., $\forall S_i, S_j\in \mathcal{S}$, $S_i\cap S_j = \emptyset$; given such a partition, FBC$(\mathcal{B}(v_i))$ is $\sum\limits_{\forall S_i\in \mathcal{S}}|S_i|$.
Such a partition for Example 1 is shown in Figure~\ref{fig:l4tpartition}, where each color (e.g, blue) represents a set of nodes which is disjoint from the 
other sets designated by different colors (e.g., orange and green). In the rest of this section, we propose an algorithm for such disjoint partitioning.

Let us first define {\em ``pattern''} $P$ as a string of size $m$, where $\forall 1\leq i\leq m$: $P[i]\in\{0,1,X\}$.
Specially, we refer to the pattern generated by replacing all $1$s in $\mathcal{B}(v_i)$ with $X$ as the {\em initial pattern} for $v_i$.
For example, in Figure~\ref{fig:l4tpartition}, there are four attributes, $P=0XXX$ is a pattern (which is the initial pattern for $v_7$). The pattern $P=0XXX$ covers all nodes whose bit representatives start with $0$ (the nodes with green color in Figure~\ref{fig:l4intersect}). More formally, the coverage of a pattern $P$ is defined as follows:

\begin{definition}
Given the set of attributes $\mathcal{A}$ and a pattern $P$, the coverage of pattern $P$ is\footnote{\small{Note that if $P[k]=X$, $A_k$ may or may not 
belong to $\mathcal{A}_i$.}}
\begin{align}
\nonumber
\mbox{COV}(P) = \{v_i\subseteq V(\mathcal{L}_\mathcal{A}) |\forall 1\leq k& \leq m\mbox{ if }P[k]=1\mbox{ then} A_k\in \mathcal{A}_i \\
                                              &\mbox{, and if }P[k]=0\mbox{ then} A_k\notin \mathcal{A}_i\}
\end{align}
\end{definition}
In Figure~\ref{fig:l4intersect}, all nodes with green color are in $COV(0XXX)$ and all nodes with blue color are in $COV(XXX0)$. Specifically, node $v_3$ with bit representative $\mathcal{B}(v_3)=0011$ is in $COV(0XXX)$ because its first bit is $0$. Note that a node may be covered by multiple patterns, e.g., node $v_6$ with bit representative $\mathcal{B}(v_6)=0110$ is in $COV(0XXX)$ and $COV(XXX0)$. We refer to patterns with disjoint coverage as {\em disjoint patterns}. For example, $01XX$ and $11XX$ are {\em disjoint patterns}.

Figure~\ref{fig:l4tpartition}, provides a set of disjoint patterns (also presented in the 4-th column of Figure~\ref{fig:l4FBC}) that partition $\mathcal{U}_{v_{15}}$ in Figure~\ref{fig:l4intersect}. The nodes in the coverage of each pattern is colored with a different color.
Given a pattern $P$ let $x_k(P)$ be the number of $X$s in $P$; the number of nodes covered by the pattern $P$ is $2^{k_x(P)}$. Thus given a set of disjoint patterns $\mathcal{P}$ that partition $\mathcal{U}_{v_i}$, FBC($\mathcal{B}(v_i)$) is simply $\sum\limits_{\forall P_j\in\mathcal{P}} 2^{k_x(P_j)}$.
For example, considering the set of disjoint patterns in Figure~\ref{fig:l4tpartition}, the last column of Figure~\ref{fig:l4FBC} presents the number of nodes in the coverage of each pattern $P_j$ (i.e., $2^{k_x(P_j)}$); thus FBC($1111$) in this example is the summation of the numbers in the last column (i.e., $13$).

In order to identify the set of disjoint patterns that partition $\mathcal{U}_{v_i}$, a baseline solution may need to compare every initial pattern for every node $v_j\in\mathcal{F}_{v_i}$ with all the discovered patterns $P_l$, to determine the set of patterns that are disjoint from $P_l$. As a result, because every pattern covers at least one node in $\mathcal{L}_(\mathcal{A}_i)$, the baseline solution may generate up to FBC($\mathcal{B}(v_i)$)$=|\mathcal{U}_{v_i}|$ patterns and (comparing all patterns together) its running time in worst case is quadratic in its output (i.e., FBC$(\mathcal{B}(v_i))^2$).
As a more comprehensive example, let us introduce Example 2; we will use this example in what follows.

\noindent{\bf Example 2.} 
Consider a case where $\mathcal{A}=\{A_1, \dots , A_{11}\}$ and we want to compute FBC for the root of $\mathcal{L}_\mathcal{A}$, i.e. $v_i = v(\mathcal{A})$. Let $\mathcal{F}_{v_i} $ be $\{ v_{1984},v_{248},v_{499}\}$ as shown in Figure~\ref{fig:E3F}.

\begin{figure}[!h]
        \centering
        \begin{small}
          \begin{tabular}{|@{}c@{}|@{}c@{}|@{}c@{}|@{}c@{}|}
            \hline
            {\bf id}&{\bf $v_i\in \mathcal{F}_{v_i}$}&{\bf $\mathcal{B}(v_i)$}&{\bf Initial Patterns}\\ \hline
            $1$&$v_{1984}$&$11111000000$&$ P_1 = XXXXX000000$\\ \hline
            $2$&$v_{248}$&$ 00011111000$&$ P_2 = 000XXXXX000$\\ \hline
            $3$&$v_{499}$&$ 00111110011$&$ P_3 = 00XXXXX00XX$\\ \hline
          \end{tabular}
        \end{small}
        \caption{$\mathcal{F}_{v_i}$ in Example 2}
        \label{fig:E3F}
\end{figure}

In Example 2, consider the two initial patterns $P_1$ and $P_2$, for the nodes $v_{1984}$ and $v_{248}$, respectively. Note that for each initial pattern $P$ for a node $v_j$, $COV(P)$ is equal to $\mathcal{L}_{\mathcal{A}_j}$.
In order to partition the space in $COV(P_2)\setminus COV(P_1)$, the baseline solution generates the three patterns shown in Figure~\ref{fig:cov2-1}.
\begin{figure}[!h]
        \centering
        \small
        \begin{align}
        \nonumber COV(P_2)\setminus COV(P_1) &= P_{2.1}: 000XX\underline{1XX}000\\
        \nonumber &, \hspace{0.1in} P_{2.2}: 000XX\underline{01X}000\\
        \nonumber &, \hspace{0.1in} P_{2.3}: 000XX\underline{001}000
        \end{align}
        \caption{Patterns of $COV(P_2)\setminus COV(P_1)$ in Example 2}
        \label{fig:cov2-1}
\end{figure}

Yet for the initial pattern $P_3$ (for the node $v_{449}$), in addition to $P_1$, the baseline needs to compare it against $P_{2.1}$, $P_{2.2}$, and $P_{2.3}$ (while at each comparison a pattern may decompose to multiple patterns) to guarantee that the generated patterns are disjoint.
Thus, 

it is desirable to partition $\mathcal{U}_{v_i}$ in a way to generate only one pattern
per maximal frequent node. .

Consider the disjoint patterns generated for $COV(P_2)\setminus COV(P_1)$ in Example 2. As shown in Figure \ref{fig:cov2-1} these patterns differ in the underlined attributes $A_6$, $A_7$, and $A_8$. These attributes are the ones for which $P_2[k]=X$ and $P_1[k]=0$ (column 3 in Figure~\ref{fig:E3F}). Clearly, if all these attributes are zero, then they will be in $COV(P_1)$. Thus, if we mark these attributes and enforce that at least one of these attributes should be $1$, there is no need to generate three patterns.

To do so, for each maximal frequent node $v_{j'} = \mathcal{F}_{v_i}[j]$ with initial pattern $P_j$, we create a {\em bipartite graph} $G_j(\mathcal{V}, \mathcal{E})$ as follows:
\begin{enumerate}
\item For each attribute $A_k$ that belongs to $\mathcal{A}_{j'}$, we add a node $A_k$ in one part (top level) of $G_j$; i.e., $\forall A_k\in \mathcal{A}_{j'}:\, A_k\in \mathcal{V}(G_j)$.
\item For each $1\leq l < j$, we add the node $P_l$ ( the pattern for the node $v_{l'} =\mathcal{F}_{v_i}[l]$) in the other part (bottom level) of $G_j$, i.e., $\forall l\in [1,j):\, P_l\in \mathcal{V}(G_j)$.
\item For each $1\leq l < j$ and each attribute $A_k$ in $\mathcal{A}_{j'}$ that $A_k$ does not belong to $\mathcal{A}_{l'}$, i.e. $P_l[k] = 0$, we add the edge $\xi_{k,l}$ from $A_k$ to $P_l$ in $G_j$; i.e., $\forall l\in [1,j), A_k\in\mathcal{A}_{j'}$ where $P_l[k] = 0:\, \xi_{k,l}\in \mathcal{E}(G_j)$. 
\end{enumerate}

\begin{figure*}[t]
\begin{minipage}[t]{0.46\linewidth}
        \centering
        \includegraphics[width=1.05\textwidth]{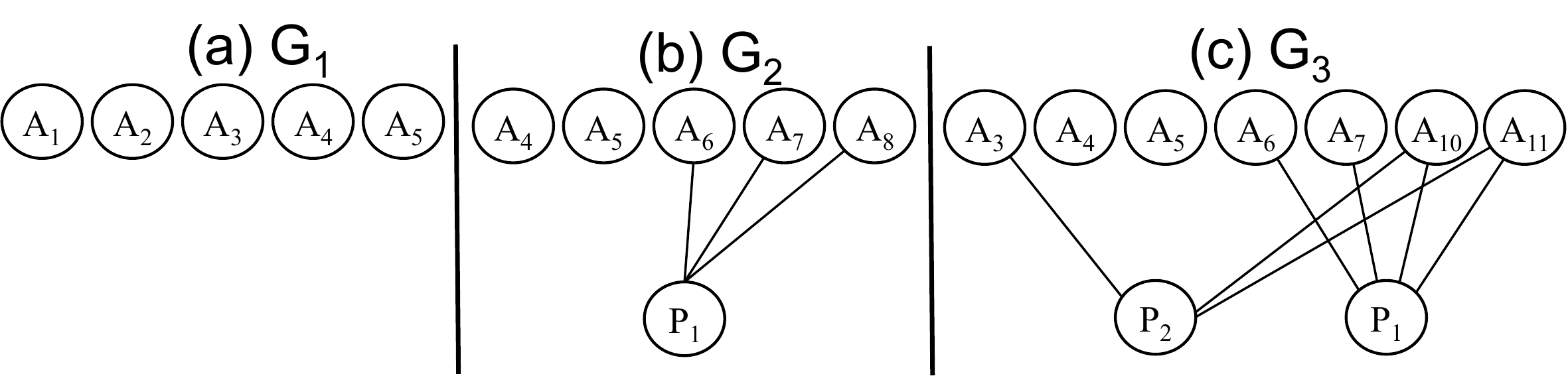}
        \caption{(a) $G_1$ for $v_{1984}=\mathcal{F}_{v_i}[1]$ with pattern $P_1$, (b) $G_2$ for $v_{248}=\mathcal{F}_{v_i}[2]$ with pattern $P_2$, (c) $G_3$ for $v_{499}=\mathcal{F}_{v_i}[3]$ with pattern $P_3$.}
        \label{fig:bipartite}
    \end{minipage}
    \hspace{3mm}
    \begin{minipage}[t]{0.22\linewidth}
        \centering
        \includegraphics[width=0.89\textwidth]{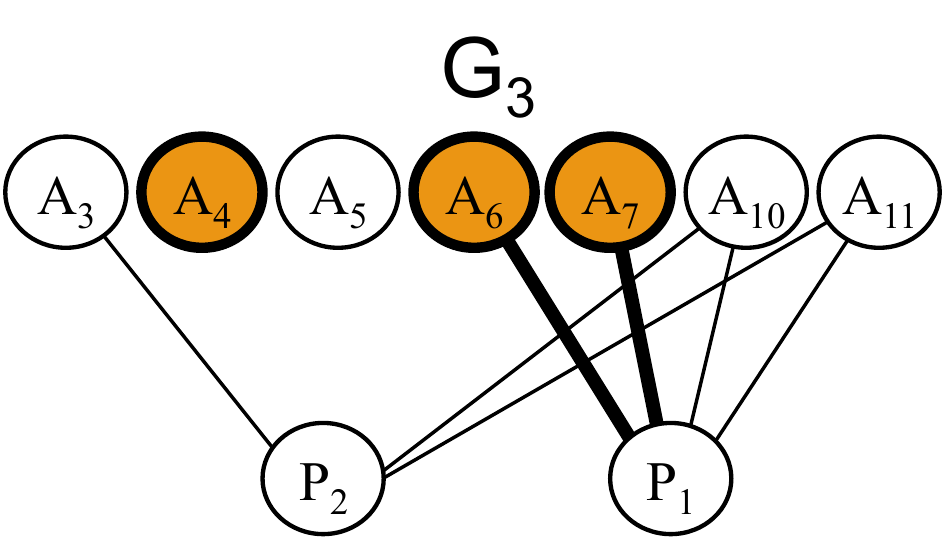}
        \caption{$v_1 = 00010110000$ will not be assigned to $P_3$, $v_1 \notin \alpha(P_3,G_3)$. }
        \label{fig:E3bipartite_v1}
    \end{minipage}
    \hspace{2mm}
    \begin{minipage}[t]{0.24\linewidth}
        \centering
        \includegraphics[width=0.84\textwidth]{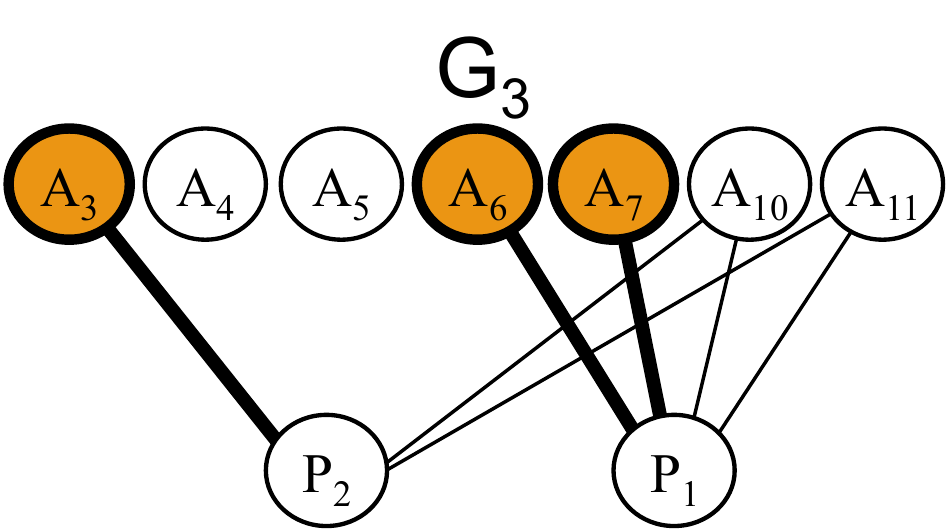}
        \caption{$v_2 = 00100110000$ will be assigned to $P_3$, $v_1 \in \alpha(P_3,G_3)$.}
        \label{fig:E3bipartite_v2}
    \end{minipage}
\end{figure*}

Figure~\ref{fig:bipartite} show the bipartite graphs $G_{1}$, $G_{2}$ and $G_{3}$ for the maximal frequent nodes $v_{1984}$, $v_{248}$ and $v_{499}$ of Example 2. 
As shown in Figures~\ref{fig:bipartite}(a) the bipartite graphs $G_{1}$ for $P_1$ has only the attribute nodes $A_1$, $A_2$, $A_3$, $A_4$, and $A_5$ which belong to $\mathcal{A}_{1984}$ at the top level and the other part is empty because there is no pattern before $P_1$.
The nodes at the top level of $G_2$ (Figures~\ref{fig:bipartite}(b)) are the attributes that belong to $\mathcal{A}_{248}$ and the bottom level has only one node $P_1$ which is the pattern of $v_{1984}$.
Since $A_6$, $A_7$, and $A_8$ belong to $v_{248}$ but do not belong to $v_{1984}$ (i.e., $P_1[6] = P_1[7] = P_1[8] = 0 $), there is an edge between these nodes and $P_1$.
Similarly, the nodes at the top level of $G_3$, Figures~\ref{fig:bipartite}(c), are the attributes that belong to $\mathcal{A}_{499}$; however, the bottom level has two nodes $P_1$ and $P_2$, patterns for $v_{1984}$ and $v_{248}$, respectively.
Since $A_6$, $A_7$, $A_{10}$, and $A_{11}$ belong to $\mathcal{A}_{499}$ but not to $\mathcal{A}_{1984}$ (i.e., $P_1[6] = P_1[7] = P_1[10] = P_1[11] = 0 $), there is an edge between those attributes and $P_1$. Likewise, $A_3$, $A_{10}$, and $A_{11}$ belong to $\mathcal{A}_{499}$ but do not belong to $\mathcal{A}_{248}$ (i.e., $P_2[3] = P_2[10] = P_2[11] = 0 $), there is an edge between those attributes and $P_2$.

Our goal is to assign every frequent node in $\mathcal{U}_{v_i}$ to one and only one partition. We first devise a rule that given an initial pattern $P_j$ and its corresponding bipartite graph $G_j$ allows us to decide if a given frequent node $v_l'$ should belong to a partition with $P_j$ as its representative or not, i.e., $v_l' \in \alpha(P_j,G_j)$. Next, we prove in Theorem~\ref{theo:rule1} that this rule enables us to assign every frequent node to one and only one pattern $P_j$.

\noindent {\bf Rule 1: } {\em 
Given a pattern $P_j$, its bipartite graph $G_j$, and a node $v_{l'}\in COV(P_j)$, node $v_{l'}$ will be assigned to the partition corresponding to $P_j$, i.e., $v_{l'} \in \alpha(P_j,G_j)$, iff for every $P_r$ in the bottom level of $\mathcal{V}(G_j)$, there is at least one attribute $A_k\in \mathcal{A}_{l'}$ in bottom level of $\mathcal{V}(G_j)$ where the edge $\xi_{k,r}$ belongs to $\mathcal{E}(G_j)$}.

In Example 2, consider the pattern $P_3 = 00XXXXX00XX$ and its corresponding bipartite graph $G_3$ in Figure \ref{fig:bipartite}(c). Clearly, $v_1 = 00010110000$ and $v_2 = 00100110000$ are frequent nodes because both are in $COV(P_3)$. However based on rule 1, using the bipartite graph $G_3$, we do not assign $v_1$ to the partition corresponding to $P_3$. In Figure~\ref{fig:E3bipartite_v1}, attributes belonging to $v_1$ are colored. One can observe that, none of the attributes connected to $P_2$ are colored. Thus, node $v_1$ should not be in $P_3$. On the other hand, in Figure~\ref{fig:E3bipartite_v2}, there is at least one attribute connected to $P_1$ and $P_2$, which are colored, i.e., node $v_2 = 00100110000$ is assigned to $P_3$.

\begin{theorem}
\label{theo:rule1}
Given all initial patterns $\{P_j\}$ for each maximal frequent node $v_{j'} = \mathcal{F}_{v_i}[j]$ and the corresponding bipartite graph $G_j$ for every $P_j$, rule 1 guarantees that each frequent node $v_r$ will be assigned to one and only one pattern $P_j$.
\end{theorem}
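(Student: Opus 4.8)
The plan is to recognize that, although Rule~1 is phrased through the bipartite graphs $G_j$, it is nothing more than a \emph{first-cover} rule, after which existence and uniqueness follow from a minimum-index argument. First I would translate Rule~1 into set language. Fix a frequent node $w\in\mathcal{U}_{v_i}$ with attribute set $\mathcal{A}_w$, and recall from the construction that each initial pattern satisfies $\mbox{COV}(P_j) = V(\mathcal{L}_{\mathcal{A}_{j'}})$, so $w\in\mbox{COV}(P_j)$ holds exactly when $\mathcal{A}_w\subseteq\mathcal{A}_{j'}$. By the way $G_j$ is built, the edge $\xi_{k,r}$ exists iff $A_k\in\mathcal{A}_{j'}$ and $A_k\notin\mathcal{A}_{r'}$. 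Hence the graph condition ``for every bottom-level node $P_r$ there is some $A_k\in\mathcal{A}_w$ with $\xi_{k,r}\in\mathcal{E}(G_j)$'' asserts precisely that, for each $r<j$, some attribute of $w$ is missing from $\mathcal{A}_{r'}$, i.e. $\mathcal{A}_w\not\subseteq\mathcal{A}_{r'}$, i.e. $w\notin\mbox{COV}(P_r)$. Thus Rule~1 reduces to: $w\in\alpha(P_j,G_j)$ iff $w\in\mbox{COV}(P_j)$ and $w\notin\mbox{COV}(P_r)$ for every $r<j$; equivalently, $w$ is assigned to $P_j$ iff $j$ is the \emph{smallest} index for which $w\in\mbox{COV}(P_j)$.

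With this reformulation the theorem is immediate. For existence, since $w\in\mathcal{U}_{v_i}=\bigcup_{v_{j'}\in\mathcal{F}_{v_i}}V(\mathcal{L}_{\mathcal{A}_{j'}})$, the index set $\{j : w\in\mbox{COV}(P_j)\}$ is non-empty; letting $j^\star$ be its minimum, the reformulated rule assigns $w$ to $P_{j^\star}$. For uniqueness I would rule out every other pattern. If $j<j^\star$, then $w\notin\mbox{COV}(P_j)$ by minimality of $j^\star$, so the first condition of Rule~1 fails. If $j>j^\star$, then $P_{j^\star}$ is a bottom-level node of $G_j$ and $w\in\mbox{COV}(P_{j^\star})$ gives $\mathcal{A}_w\subseteq\mathcal{A}_{(j^\star)'}$; consequently no attribute of $w$ is missing from $\mathcal{A}_{(j^\star)'}$, so no edge $\xi_{k,j^\star}$ incident to $P_{j^\star}$ has its endpoint $A_k$ in $\mathcal{A}_w$, and the edge condition fails at $P_{j^\star}$. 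In either case $w\notin\alpha(P_j,G_j)$, so $w$ is claimed by exactly one pattern, namely $P_{j^\star}$.

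I expect the only real obstacle to be the faithful translation in the first step, since everything afterward is a routine minimum-index argument. In particular I would be careful to (i) match the quantifier structure ``for every $P_r$, at least one attribute $A_k$'' to the statement ``$\mathcal{A}_w\not\subseteq\mathcal{A}_{r'}$ for every $r<j$'', and (ii) read each attribute node $A_k$ as living in the top level of $G_j$ (the statement of Rule~1 appears to misprint this as ``bottom level''). Once the equivalence with the first-cover rule is established, disjointness of the resulting collection and the resulting identity $\mbox{FBC}(\mathcal{B}(v_i))=\sum_{P_j}2^{k_x(P_j)}$ follow with no further work.
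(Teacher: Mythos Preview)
Your proof is correct and follows essentially the same approach as the paper: both arguments unwind Rule~1 into the first-cover rule ``assign $w$ to $P_j$ iff $j$ is the least index with $w\in\mbox{COV}(P_j)$'' and then verify existence and uniqueness by the obvious minimum-index argument, checking that for $l>j^\star$ the edge condition fails at the bottom-level node $P_{j^\star}$. One caveat: your closing remark that $\mbox{FBC}(\mathcal{B}(v_i))=\sum_{P_j}2^{k_x(P_j)}$ ``follows with no further work'' is incorrect---the sets $\alpha(P_j,G_j)$ are proper subsets of $\mbox{COV}(P_j)$ for $j>1$, so the correct identity is $\mbox{FBC}(\mathcal{B}(v_i))=\sum_j|\alpha(P_j,G_j)|$, and computing $|\alpha(P_j,G_j)|$ is precisely the nontrivial counting problem that the paper handles in a separate algorithm.
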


\begin{proof}
Consider a frequent node $v_r\in\mathcal{U}_{v_i}$.\\
First, since $\mathcal{U}_{v_i} = \underset{\forall v_j\in\mathcal{F}_{v_i}}{\cup}\mathcal{L}_{\mathcal{A}_j}$, there exists at least one node $v_{j'}=\mathcal{F}_{v_i}[j]$ such that $v_r\in COV(P_j)$.
Let $v_{j'}=\mathcal{F}_{v_i}[j]$ be the first nodes in $\mathcal{F}_{v_i}$ where $v_r\in COV(P_j)$, i.e., $\forall v_{l'}=\mathcal{F}_{v_i}[l]$ where $l<j$, $v_r\notin COV(P_l)$. Here we show that $v_r$ is assigned to $P_j$.
Consider any node $v_{l'}=\mathcal{F}_{v_i}[l]$ where $l<j$.
Since, $v_r\in COV(P_j)$, for each attribute $A_k\in \mathcal{A}_r$, $A_k$ belongs to $\mathcal{A}_{j'}$, i.e., $P_j[k]=X$.
On the other hand, since $v_r\notin COV(P_l)$, there exists an attribute $A_k\in \mathcal{A}_k$ where $A_k$ does not belongs to $\mathcal{A}_{l'}$, i.e., $P_l[k]=0$. Thus, for each initial pattern $P_l$ where $l<j$, there exists an edge $\xi_{k,l}\in\mathcal{E}(G_j)$ where $A_k\in \mathcal{A}_r$, and consequently $v_r$ is assigned to $P_j$.\\
The only remaining part of the proof is to show that $v_r$ is assigned to only one pattern $P_j$.
Clearly $v_r$ is not assigned to any pattern $P_{l}$ where $v_r\notin COV(P_l)$.
Again, let $v_{j'}=\mathcal{F}_{v_i}[j]$ be the first nodes in $\mathcal{F}_{v_i}$ where $v_r\in COV(P_j)$.
Since, $v_r\in COV(P_j)$, $\nexists A_k\in \mathcal{A}_r$ where $P_j[k]$ is $0$.
Thus for a node (if any) $v_{l'}=\mathcal{F}_{v_i}[l]$ where $l>j$ and $v_r\in COV(P_l)$, there is no attribute $A_k\in\mathcal{A}_r$ where $P_l[k]=X$ and $P_j[k]=0$. Thus, for any such pattern $P_l$, the assignment $v_r$ to $P_l$ violates Rule 1 ($v_r$ is not assigned to $P_l$).
\end{proof}

In the following, we first show how to efficiently generate the patterns and their corresponding bipartite graphs; then we explain how to use these bipartite graphs to calculate FBC.


\subsection{Efficient generation of bipartite graphs}\label{subsec:patterngen}
Our objective here is to construct the patterns and corresponding bipartite graphs for each node $v_{j'} = \mathcal{F}_{v_i}[j]$. 
The direct construction of the bipartite graphs is straight forward:
$\forall 1\leq l \leq j-1$ let $v_{j'}$ be $\mathcal{F}_{v_i}[j]$ and $v_{l'}$ be $\mathcal{F}_{v_i}[l]$; $\forall k$ where $A_k\in \mathcal{A}_{j'}$ and $A_k\notin \mathcal{A}_{l'}$, add $P_l$ to $\xi_j^k$. This method however is in $O(m|\mathcal{F}_{v_i}|^2)$.
Instead, we propose the following algorithm that is linear to $|\mathcal{F}_{v_i}|$. The key idea is to construct the graph, in an attribute-wise manner.

For every attribute $A_k\in \mathcal{A}_{j'}$, we maintain a list ($\xi_j^k$) for all $v_{j'} = \mathcal{F}_{v_i}[j]$ in order to store (the edges of) $G_j$;
for each edge $\xi_{k,l}$ from $A_k$ to $P_l$ in $G_j$, we add the index of $P_l$ to the corresponding list of $A_k$. $\xi_j^k$ represents the adjacent nodes to $A_k$
in bipartite graph $G_j$.
For example, in Figure~\ref{fig:bipartite}(c), which is the corresponding bipartite graph for $P_3$ ($j = 3$ and $k \in \{3,4,5,6,7,10,11\}$), $\xi^3_3=\{ P_2\}$, $\xi_3^6=\{ P_1\}$, $\xi_3^7=\{ P_1\}$, $\xi_3^{10}=\{ P_1, P_2\}$, and $\xi_3^{11}=\{ P_1, P_2\}$, while $\xi_3^4$ and $\xi_3^5$ are empty sets.

For each attribute $A_k$, let us define the inverse index $\delta_k$
as a binary vector of size $|\mathcal{F}_{v_i}|$; 
for every node $v_{j'}=\mathcal{F}_{v_i}[j]$ if $A_k\in \mathcal{A}_{j'}$ then $\delta_k[j]$ is $1$ and $0$ otherwise. In Example 2, for attribute $A_3$, $\delta_6 =[1,0,1]$ because $A_3 \in \mathcal{A}_{1984}$ ($v_{1984} =\mathcal{F}_{v_i}[1]$) and $A_6 \in \mathcal{A}_{499}$ ($v_{499} =\mathcal{F}_{v_i}[3]$) but $A_3 \notin \mathcal{A}_{248}$ ($v_{248} =\mathcal{F}_{v_i}[2]$).
Similarly, for attribute $A_6$, $\delta_6 =[0,1,1]$ because $A_6 \notin \mathcal{A}_{1984}$ but $A_6 \in \mathcal{A}_{248}$ and $A_6 \in \mathcal{A}_{499}$.
One observation is that for an index $j_1 < j_2$ where $\delta_k[j_1]=1$ and ${\delta_k[j_2]}=1$, $\xi_{j_1}^k$ is the subset of $\xi_{j_2}^k$.
$\xi_{j_1}^k$ is the set of indices $P_l$ for $1\leq l < j_1$ where $A_k\notin \mathcal{A}_{l'}$ where $v_{l'}=\mathcal{F}_{v_i}[l]$.
Since $j_1 < j_2$, if $P_l\in \xi_{j_1}^k$ then $P_l\in \xi_{j_2}^k$.
Specifically, considering two indices $j_1 < j_2$ where $\delta_k[j_1]=1$, ${\delta_k[j_2]}=1$, and $\forall j_1<l<j_2$ $\delta_j[l]=0$: $$\xi^k_{j_2} = \,\xi^k_{j_1} \cup \{ P_l | j_1<l<j_2\}$$

In Example 2, for attribute $A_3$ where $\delta_3 =[1,0,1]$, $\xi_1^3 = \emptyset$ (in $G_1$ the bottom level is empty), $\xi_3^3 = \{P_2\}$ (adjacent nodes of $A_3$ in Figure~\ref{fig:bipartite}(c)). We observe that $\xi_1^3 \subseteq \xi_3^3$. For attribute $A_6$ where $\delta_6 =[0,1,1]$, $\xi_2^6 = \{P_1\}$ (adjacent nodes of $A_6$ in Figure~\ref{fig:bipartite}(b)), $\xi_3^6 = \{P_1\}$ (adjacent nodes of $A_6$ in Figure~\ref{fig:bipartite}(c)) and $\xi_2^6 \subseteq \xi_3^6$. 
For example, for a node $v_i$, suppose $\mathcal{F}_{v_i}= \{ v_{i_1},v_{i_2}, \dots , v_{i_8} \}$. For an attribute $A_k$, let $\delta_k$ be $[0,0,1,0,0,1,1,0]$; in this example, $A_k$ only belongs $\mathcal{A}_{i_3}$, $\mathcal{A}_{i_6}$, and $\mathcal{A}_{i_7}$.
Since $A_k$ does not belong to $\mathcal{A}_{i_1}$ and $\mathcal{A}_{i_2}$, $\xi_k^3 = \{ P_1, P_2 \}$. In fact, $\delta_k[1]$, and $\delta_k[2]$ are the zeros before $\delta_k[3] = 1$. Similarly, $\xi_k^6 = \{ P_1, P_2 , P_4, P_5\}$ ($\delta_k[1]$, $\delta_k[2]$, $\delta_k[4]$, and $\delta_k[5]$ are the zeros before the $\delta_k[6] = 1$). Also, $\xi_k^3\subset \xi_k^6$ and more precisely $\xi_k^6 = \xi_k^3 \cup \{ P_4, P_5\}$.

We use this observation to design Algorithm~\ref{alg:FBC_bipartite} to generate the corresponding patterns and bipartite graphs
that partition $\mathcal{U}_{v_i}$.
After transforming each node $v_{j'}=\mathcal{F}_{v_i}[j]$ to $P_j$ (by changing $1$'s in $\mathcal{B}(v_{j'})$ to $X$) and generating $\delta_k$ (lines 3-7), for every attribute $A_k$ the algorithm iterates over $\delta_k$ and gradually constructs the bipartite graphs. Specifically, for every attribute $A_k$, it iterates over the inverse index $\delta_k$; if $\delta_k[j]=0$, then it adds $P_j$ to the temporary set (temp) and if $\delta_k[j]=1$, it adds temp to $\xi_j^k$ (lines 10-17).
In this way the algorithm conducts only one pass over $\delta_k$ (with size $|\mathcal{F}_{v_i}|$) for each attribute $A_k$. Thus, the computational complexity of Algorithm~\ref{alg:FBC_bipartite} is in $O(m|\mathcal{F}_{v_i}|)$.

\begin{algorithm}[!h]
\caption{{\bf ConstructBipartiteGraphs}\\
		 {\bf Input:} node $v_i$ and maximal frequent nodes $\mathcal{F}_{v_i}$ 
		}
\begin{algorithmic}[1]
\label{alg:FBC_bipartite}
\STATE $P = [~]$ , $\xi = [~][~]$
\FOR {$j=1$ to $|\mathcal{F}_{v_i}|$}
	\STATE $v_{j'}=\mathcal{F}_{v_i}[j]$
	\FOR{$k=1$ to $\leq |\mathcal{A}_i|$}
		\STATE $P_j[k] = X$ {\bf if} $A_k\in \mathcal{A}_{j'}$ {\bf otherwise} $0$
		\STATE $\delta_k[j] = 1$ {\bf if} $A_k\in \mathcal{A}_{j'}$ {\bf otherwise} $0$
	\ENDFOR
	\STATE $P[j] = P_j$ //Add $P_j$ to the patterns 
\ENDFOR
\FOR {$k=1$ to $\leq |\mathcal{A}_i|$}
	\STATE temp = $\{ \}$
	\FOR{$j = 1$ to $|\mathcal{F}_{v_i}|$}
		\STATE {\bf if} $\delta_k[j]=0$ {\bf then} add $P_j$ to temp
		\STATE {\bf else} $\xi_j^k = $ temp
		\STATE \hspace{0.2in} $\xi[j].append (\xi_j^k)$
	\ENDFOR
\ENDFOR
\STATE {\bf return} ($P$,$\xi$)
\end{algorithmic}
\end{algorithm}

\subsection{Counting}
So far, we partitioned $\mathcal{U}_{v_i}$, generating the patterns and their corresponding bipartite graphs, while enforcing Rule 1.
Still, we need to identify the number of nodes assigned to each pattern $P_j$, while enforcing Rule 1 (i.e., determine $|\alpha(P_j,G_j)|$) in order to calculate the FBC.
As previously discussed, the number of nodes in $|COV(P_j)|$ is $2^{k_x(P_j)}$, where $k_x(P_j)$ is the number of $X$'s in $P_j$. However, enforcing Rule 1, 
the number of nodes assigned to partition $P_j$, $|\alpha(P_j,G_j)|$, is less than $|COV(P_j)|$.

Since Rule 1 partitions $\mathcal{U}_{v_i}$ to disjoint sets of $\alpha(P_j,G_j)$, $\forall v_j\in \mathcal{F}_{v_i}$, FBC$(\mathcal{B}(v_i))$ can be calculated by summing the number of nodes assigned to each pattern $P_j$; i.e.,
\begin{align}
\mbox{FBC}(\mathcal{B}(v_i)) = \sum\limits_{v_j\in \mathcal{F}_{v_i}}|\alpha(P_j,G_j)|
\end{align}
Thus, in Example 2, using the bipartite graphs in Figure~\ref{fig:bipartite} we can calculate FBC$(\mathcal{B}(v_i))$ as $|\alpha(P_1,G_1)|$ + $|\alpha(P_2,G_2)|$ + $|\alpha(P_3,G_3)|$.

We propose the recursive Algorithm~\ref{alg:patterncount} to calculate $|\alpha(P_j,G_j)|$. 
The algorithm keeps simplifying $G_j$ removing the edges (as well as the edges in the bottom part that are not connected to any nodes), until it reaches a simple (boundary) case
with no edges in the graph such that $|\alpha(P_j,G_j)|$ can be computed directly.
For example the bipartite graph $G_1$ for pattern $P_1$ in Figure~\ref{fig:bipartite}(a) has no edges thus Algorithm~\ref{alg:patterncount} returns $|\alpha(P_1,G_1)| = 2^5 = 32$.
Figure~\ref{fig:bipartite}(b) presents a more complex example ($G_2$) in which there is one node in the bottom part of the graph having three edges. In this example, out of the $2^3$ combinations of the three attributes connected to $P_1$, one (where all of them are $0$) violates Rule 1. As a result, $|\alpha(P_1,G_1)|$ is $2^3-1$ 
times the number of valid combinations of the rest of the attributes (i.e., $A_4$ and $A_5$). Thus, $|\alpha(P_1,G_1)| = (2^3-1)\times 2^2 = 28$.
In more general cases, each attribute node in the upper part may be connected to several nodes in the bottom part.
In these cases, in order to simplify $G_j$ faster, the algorithm considers a set $S$ of attributes connected to the same set of nodes
, while maximizing the number of nodes that are connected to $S$.
In Figure~\ref{fig:bipartite}(c) (showing $G_3$) attributes $A_{10}$ and $A_{11}$ are connected to both $P_1$ and $P_2$, i.e., $|\xi_3^{10}|$ and $|\xi_3^{11}|$ are $|\{P_1,P_2\}|=2$ while for other nodes it is less than $2$.
If at least one of the attributes in $S$ is not $0$,
for any node $P_l$ connected to $S$, all the edges that are connected to $P_l$ can be removed from $\mathcal{E}(G_j)$.
In Figure~\ref{fig:bipartite}(c), if either of $A_{10}$ or $A_{11}$ is not $0$, all the edges connected to $P_1$ and $P_2$ can be removed.
The number of such combinations is $2^{|S|}-1$. Thus, the algorithm conducts the updates in $G_j$, recursively counting the number of 
remaining combinations, multiplying the result with $2^{|S|}-1$.
The remaining case is when all the attributes in $S$ are $0$. 
This case is only allowed if for all $P_l$ connected to $S$, there exists an index $q'\notin S$ where $P_l\in \xi_j^{q'}$.
In this case, for each $P_l$ connected to $S$, one of the attributes $A_{q'}$ where $q'\notin S$ should be non-zero in accordance to Rule 1. 
Thus, $\forall q\in S$, the algorithm sets $P_j[q]$ to zero and recursively computes the number of such nodes in $\alpha(P_j,G_j)$.
Following Algorithm~\ref{alg:patterncount} to calculate the $|\alpha(P_3,G_3)|$ in Figure~\ref{fig:bipartite}, we have $(2^2 -1)2^5 + (2^2-1)(2^1-1)2^2 =96$.

\begin{algorithm}[!ht]
\caption{{\bf CountPatterns}\\
         {\bf Input:} Node $v_i$, pattern $P_j$, and edges $\xi_j$ of the bipartite graph $G_j$\\
          {\bf Output:} $|\alpha(P_j,G_j)|$
        }
\begin{algorithmic}[1]
\label{alg:patterncount}
\STATE {\bf if} $|\xi_j|=0$ {\bf then return} $2^{k_x(P_j)}$ {\it \small{// $k_x(P_j)$ is the number $X$'s in $P_j$}}
\STATE $q_{max} = \underset{1\leq q \leq |\mathcal{A}_i|}{\mbox{argmax}} ( |\xi_j^q|)$
\STATE $S = \{q |  \xi_j^q = \xi_j^{q_{max}}\}$
\STATE $\xi$ = remove $S$ from $\xi_j$; $P$ = $P_j$
\STATE {\bf for all} $q\in S$ {\bf do} $P[q]=0$
\IF{$\forall P_l \in \xi_j^{q_{max}}:$ $\exists q'\notin S$ where $P_l\in \xi_j^{q'}$}
    \STATE count = {\bf CountPatterns}($v_i$, $P$, $E$)
\ELSE
    \STATE count $ = 0$
\ENDIF
\STATE remove all the instances of $P_l\in \xi_j^{q_{max}}$ from $\xi$
\STATE {\bf return} count + $(2^{|S|}-1)\times ${\bf CountPatterns}($v_i$, $P$, $E$)
\end{algorithmic}
\end{algorithm}

Utilizing Algorithms~\ref{alg:FBC_bipartite} and~\ref{alg:patterncount}, Algorithm~\ref{alg:FBC} presents our final algorithm to compute FBC($\mathcal{B}(v_i)$). 
In Example 2, Algorithm~\ref{alg:FBC} returns FBC$(\mathcal{B}(v_i))$ as $|\alpha(P_1,G_1)|$ + $|\alpha(P_2,G_2)|$ + $|\alpha(P_3,G_3)| = 32 +28+96 = 156$.

\begin{algorithm}[!ht]
\caption{{\bf FBC}\\
         {\bf Input:} Node $v_i$ and maximal frequent nodes $\mathcal{F}_{v_i}$ 
        }
\begin{algorithmic}[1]
\label{alg:FBC}
\STATE sort $\mathcal{F}_{v_i}$ descendingly based on $\ell(v_i)$
\STATE $P,\xi$ = {\bf ConstructBipartiteGraphs}($v_i$, $\mathcal{F}_{v_i}$)
\STATE count = 0
\FOR{$j=1$ to $|\mathcal{F}_{v_i}|$}
\STATE count = count + {\bf CountPatterns}($v_i$, $P[j]$, $\xi[j]$)
\ENDFOR
\STATE {\bf return} count
\end{algorithmic}
\end{algorithm}

\begin{theorem}
The time complexity of Algorithm~\ref{alg:FBC} is $O\big((m+|\mathcal{F}_{v_i}|)\times\min(\mbox{FBC}(\mathcal{B}(v_i)), |\mathcal{F}_{v_i}|^2) \big)$.
\end{theorem}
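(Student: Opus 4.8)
The plan is to decompose the running time of Algorithm~\ref{alg:FBC} into its three phases --- sorting $\mathcal{F}_{v_i}$, the call to \textbf{ConstructBipartiteGraphs}, and the $|\mathcal{F}_{v_i}|$ invocations of \textbf{CountPatterns} --- and to show that the last phase dominates and matches the claimed bound. Sorting costs $O(|\mathcal{F}_{v_i}|\log|\mathcal{F}_{v_i}|)$ and, as already argued for Algorithm~\ref{alg:FBC_bipartite}, \textbf{ConstructBipartiteGraphs} costs $O(m|\mathcal{F}_{v_i}|)$. Since every maximal frequent node is itself a distinct frequent node, $\mathrm{FBC}(\mathcal{B}(v_i))\ge|\mathcal{F}_{v_i}|$, and trivially $|\mathcal{F}_{v_i}|^2\ge|\mathcal{F}_{v_i}|$, so $\min(\mathrm{FBC}(\mathcal{B}(v_i)),|\mathcal{F}_{v_i}|^2)\ge|\mathcal{F}_{v_i}|$; hence both of the first two phases are absorbed into $O\big((m+|\mathcal{F}_{v_i}|)\min(\mathrm{FBC}(\mathcal{B}(v_i)),|\mathcal{F}_{v_i}|^2)\big)$, and it remains only to bound the total cost of the \textbf{CountPatterns} recursions.

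For that I would establish two facts: (a) each single (non-recursive) invocation of \textbf{CountPatterns} runs in $O(m+|\mathcal{F}_{v_i}|)$ time, and (b) the total number of invocations, summed over all $j$, is $O(\min(\mathrm{FBC}(\mathcal{B}(v_i)),|\mathcal{F}_{v_i}|^2))$. For (a), lines~2--3 locate $q_{max}$ and its equivalence class $S$, while lines~4 and~11 delete the corresponding edges; by maintaining the attribute degrees $|\xi_j^q|$ and a signature of each neighbourhood incrementally (using the inverse indices $\delta_k$ of size $|\mathcal{F}_{v_i}|$ already produced in Algorithm~\ref{alg:FBC_bipartite}), the argmax, the grouping into $S$, and the edge deletions each touch at most $O(m)$ attributes and $O(|\mathcal{F}_{v_i}|)$ pattern slots, giving the per-call bound.

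The core is (b), which I would prove through two independent charging schemes, one per term of the $\min$. The recursion forms an (almost) binary tree whose leaves are the boundary cases $|\xi_j|=0$, each returning $2^{k_x(P)}$; the always-taken branch (line~12) deletes the entire pattern set $\xi_j^{q_{max}}$ (at least one pattern node), while the conditional branch (line~7) leaves the patterns intact but deletes the attribute class $S$. For the $|\mathcal{F}_{v_i}|^2$ bound I would argue, by induction on the bipartite graph, that a single \textbf{CountPatterns}$(P_j,G_j)$ makes $O(j)$ calls: the conditional branch only defers pattern eliminations while monotonically consuming the attribute side, so every node can be charged to the elimination of one of the at most $j-1$ pattern nodes of $G_j$; summing $O(j)$ over $j=1,\dots,|\mathcal{F}_{v_i}|$ yields $O(|\mathcal{F}_{v_i}|^2)$. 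For the $\mathrm{FBC}$ bound I would use output-sensitivity: every leaf value $2^{k_x(P)}\ge1$ and every line-12 factor $(2^{|S|}-1)\ge1$, so the leaf values for $P_j$ are positive integers summing to $|\alpha(P_j,G_j)|$, whence the number of leaves is at most $|\alpha(P_j,G_j)|$; since each one-child chain eliminates at least one of the $<j$ pattern nodes, the total number of nodes is of the same order, and $\sum_j|\alpha(P_j,G_j))|=\mathrm{FBC}(\mathcal{B}(v_i))$ gives the $O(\mathrm{FBC})$ term. Multiplying the per-call cost of (a) by the smaller of these two totals, and adding the two dominated phases, yields the stated bound.

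I expect the main obstacle to be the rigorous $O(j)$ bound on the recursion size in (b): because the conditional branch preserves the pattern set, a naive potential argument bounds only the depth and not the number of nodes, so the proof must show that the attribute-side shrinkage in that branch cannot manufacture more than $O(j)$ distinct leaves --- equivalently, that the disjoint sub-patterns produced for a single $P_j$ never exceed the number of patterns it must exclude. Verifying this invariant, together with confirming that the incremental bookkeeping in (a) genuinely stays within $O(m+|\mathcal{F}_{v_i}|)$ per call rather than $O(m|\mathcal{F}_{v_i}|)$, is where the real work lies.
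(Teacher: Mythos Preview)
Your proposal follows essentially the same decomposition and charging arguments as the paper: $O(m|\mathcal{F}_{v_i}|)$ for \textbf{ConstructBipartiteGraphs}, $O(m+|\mathcal{F}_{v_i}|)$ per recursive call of \textbf{CountPatterns}, and two independent upper bounds ($\mathrm{FBC}(\mathcal{B}(v_i))$ and $|\mathcal{F}_{v_i}|^2$) on the total number of recursive calls. On the point you flag as the main obstacle, the paper's proof is actually terser than yours --- it simply asserts that ``at each recursion of Algorithm~\ref{alg:patterncount}, at least one of the nodes $P_l$ in $\mathcal{V}(G_j)$ is removed'' to obtain the $|\mathcal{F}_{v_i}|$-per-$j$ bound and that ``at every recursion \ldots\ at least one node that is assigned to $P_j$ is counted'' for the FBC bound --- so your plan already matches (and is more careful about the line-7 branch than) the paper's own argument.
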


\begin{proof}
According to \S~\ref{subsec:patterngen}, line 2 is in $O(m|\mathcal{F}_{v_i}|)$. This is because Algorithm~\ref{alg:FBC_bipartite} conducts 
only one pass per attribute $A_k$ over $\delta_k$ (with size $|\mathcal{F}_{v_i}|$) and amortizes the graph construction cost.
Lines 3 to 6 of Algorithm~\ref{alg:FBC} call the recursive Algorithm~\ref{alg:patterncount} to count the total number of nodes assigned to the patterns.
For each initial pattern $P_j$ and its bipartite graph $G_j$ there exists at least one frequent node $v_r$ where $v_r$ is {\em only} in $COV(P_j)$ (i.e., for any other initial pattern $P_l$, $v_r$ does not belong to $COV(P_l)$), otherwise the node $v_{j'}=\mathcal{F}_{v_i}[j]$ is not maximal frequent.
Moreover, at every recursion of Algorithm~\ref{alg:patterncount}, at least one node that is assigned to $P_j$ is counted; that can be observed from lines 1 and 11 in Algorithm~\ref{alg:patterncount}.
Thus, Algorithm~\ref{alg:patterncount} is at most FBC$(v_i)$ times is called during the recursions.
On the other hand, at each recursion of Algorithm~\ref{alg:patterncount}, at least one of the nodes $P_l$ in $\mathcal{V}(G_j)$ is removed. As a result, for each initial pattern $P_j$, the algorithm is called at most $|\mathcal{F}_{v_i}|$ times; this provides another upper bound on the number of calls that Algorithm~\ref{alg:patterncount} is called during the recursion as $|\mathcal{F}_{v_i}|^2$.
At every recursion of the algorithm, finding the set of attributes with maximum number of edges is in $O(m)$. Also updating the edges and nodes of $G_j$ is in $O(m+|\mathcal{F}_{v_i}|)$.
Therefore, the worst-case complexity of Algorithm~\ref{alg:FBC} is $O\big((m+|\mathcal{F}_{v_i}|)\times\min(\mbox{FBC}(\mathcal{B}( v_i)), |\mathcal{F}_{v_i}|^2) \big)$.
\end{proof}
 \section{Discussions}\label{sec:discussion}

\subsection{Generalization to ordinal attributes}\label{subsec:disc-categorical}

The algorithms proposed in this paper are built around the notion of a lattice of attribute combinations;
extending this notion to ordinal attributes is key.
Let Dom($A_k$) be the cardinality of an attribute $A_k$.
We define the vector representative $\nu_i$ for every attribute-value combination $\mathcal{C}_i$ as the vector of size $m$, where $\forall 1\leq k \leq m$, $\nu[k]\in$ Dom$(A_k)$ is the value of the attribute $A_k$ in $\mathcal{C}_i$.
$\nu_i$ dominates $\nu_j\neq \nu_i$, {\it iff} $\forall A_k\in\mathcal{A}$, $\nu_j[k]\leq \nu_i[k]$.

Generalizing from binary attributes to ordinal attributes, the DAG (directed acyclic graph) of
attribute-value combinations is defined as follows:
Given the vector representative $\nu_i$, the DAG of $\nu_i$ is defined as $\mathfrak{D}_{\nu_i} = (V,E)$,
where $V(\mathfrak{D}_{\nu_i})$, corresponds to the set of all vector representatives dominated by $\nu_i$;
thus for all $\nu_j$ dominated by $\nu_i$, there exists a one to one mapping between each $v_j\in V(\mathfrak{D}_{\nu_i})$ and each $\nu_j$.
Let maxD be $\max_{k=1}^m($Dom$(A_k))$.
For consistency, for each node $v_j$ in $V(\mathfrak{D}_{\nu_i})$, the index $j$ is the decimal value of the $\nu_j$, in base-maxD numeral system.
Each node $v_j\in V(\mathfrak{D}_{\nu_i})$ is connected to all nodes $v_l\in V(\mathfrak{D}_{\nu_i})$ as its parents (resp. children), for $m-1$ attributes $A_{k'}$, $\nu_j[k']=\nu_l[k']$ and for the remaining attribute $A_k$, $\nu_l[k]=\nu_j[k]+1$ (resp. $\nu_l[k]=\nu_j[k]-1$).
The level of each node $v_j\in V(\mathfrak{D}_{\nu_i})$ is the sum  of its attribute values, i.e., $\ell(v_j)=\sum_{k=1}^m \nu_j[k]$.
In addition, every node $v_j$ is associated with a cost defined as $cost(v_j) = \sum_{k=1}^m cost(A_k,\nu_j[k])$.

As an example let $\mathcal{A}$ be $\{A_1,A_2\}$ with domains Dom($A_1$)=3 and Dom($A_2$)=3.
Consider the vector representative $\nu_{8}=\langle 2,2\rangle$; Figure~\ref{fig:d2} shows $\mathfrak{D}_{\nu_8}$.
In this example node $v_7$ is highlighted; the vector of $v_7$, $\nu_7 = \langle 2,1\rangle$, represents the attribute-value combination $\{ A_1=2, A_2=1\}$, and $\ell (v_7)=3$.

\begin{figure}[!t]
    \begin{minipage}[t]{0.51\linewidth}
        \centering
        \includegraphics[width=1.1\textwidth]{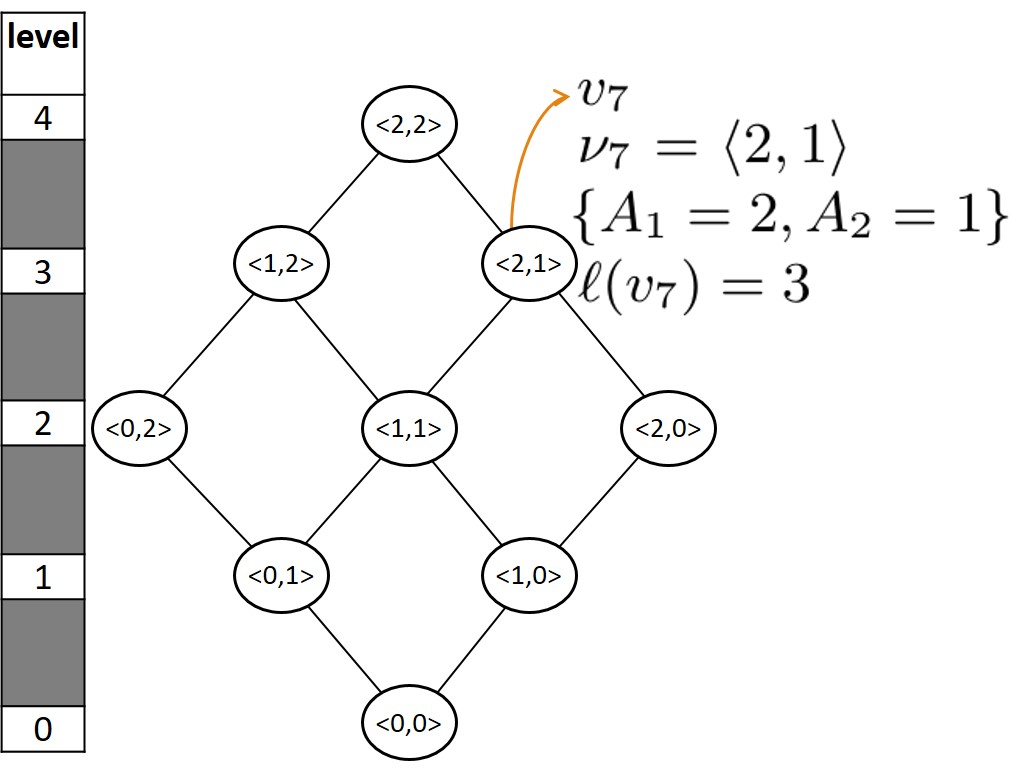}
        \caption{Illustration of $\mathfrak{D}_{\nu_8}$} 
		\label{fig:d2}
    \end{minipage}
    \begin{minipage}[t]{0.47\linewidth}
        \centering
        \vspace{-31mm}
        \includegraphics[width=0.7\textwidth]{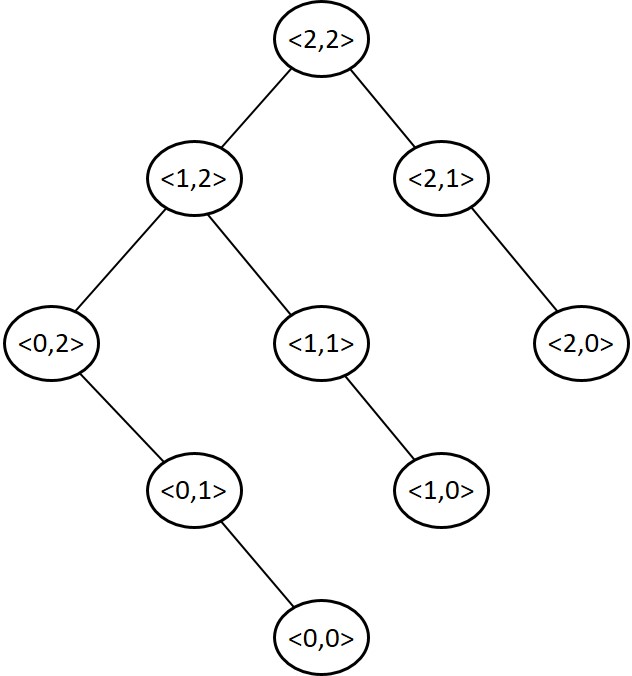}
        \caption{Tree construction for Figure~\ref{fig:d2}} 
		\label{fig:d2t}
    \end{minipage}
\end{figure}


The definition of maximal affordable nodes in the DAG is the same as Definition~\ref{def:maximalaffordable}, i.e., the nodes that are affordable and none of their parents are affordable are maximal affordable.
The goal of the general solution, as explained in \S~\ref{sec:exact}, is to efficiently idemtify the set of
maximal affordable nodes. Using the following transformation of the DAG to a tree data structure,
{\bf G-GMFA} is generalized for ordinal attributes: 
for each node $v_i$, the parent $v_j$ with the minimum index $j$ among its parents is responsible for generating it. 
Figure~\ref{fig:d2t} presents the tree transformation for $\mathfrak{D}_{\nu_8}$ in Figure~\ref{fig:d2}.
Also pre-ordering the attributes removes the need to check if a generated node has an affordable parent and the nodes that have an affordable parent never get generated.

Similarly, $\mathcal{F}_{v_i}$ is defined as the set of frequent nodes $v_j\in \mathfrak{D}_{\nu_i}$ that do not have any frequent parents; having the set of maximal frequent nodes $\mathcal{F}_{v_i}$, for each node $v_j\in\mathcal{F}_{v_i}$, we define the bipartite graph similar to \S~\ref{sec:gain}; the only difference is that for each edge $\xi_{k,l}$ in $\mathcal{E}(G_j)$ we also maintain a value $w_{k,l}$ that presents the maximum value until which
$A_k$ can freely reduce, without considering the values of other nodes connected to $P_l$.
Similarly, the size of $\mathcal{U}_{v_i}$ is computed iteratively by reducing the size of $G_j$.

\subsection{Gain function design with additional information}\label{subsec:dis-userpref}
In \S~\ref{sec:exact}, we proposed an efficient general framework that works for any 
arbitrary monotonic gain function.  We demonstrate that numerous 
application specific gain functions can 
be modeled using the proposed framework.
In \S~\ref{sec:gain}, we studied FBC, as a gain function applicable in cases such as third party services that may only have access
to dataset tuples. However, the availability of other information such as user preferences enables an understanding of demand towards attribute combinations. For example, the higher the number of queries requesting an attribute combination that is underrepresented in existing competitive rentals in the market or the higher ratings for properties with a specific attribute combination, the higher the potential
gain. In the following, we discuss a few examples of gain functions
based on the availability of extra information and the way those can be modeled
in our framework.


\vspace{0.05in}
\noindent{\bf Existence of the users' feedback:}
User feedback provided in forms such as online reviews, user-specified tags, and ratings are direct ways of
understanding user preference.
For simplicity, let us assume user feedback for each tuple is provided as a number showing how desirable the tuple is.
Consider the $n$ by $m$ matrix $D$ in which the rows are the tuples and the columns are the attribute, as well as the $1$ by $n$ vector $R$ that shows the user feedback for each tuple. 
Multiplication of $R$ and $D$ as $R' = R \times A$ generate a feedback vector for the attributes. Then, the gain for the attribute combination $\mathcal{A}_i\subseteq \mathcal{A}$ is calculated as:
\begin{align}\label{eq:gain_feedback}
gain(\mathcal{A}_i) = \sum\limits_{\forall A_j\in \mathcal{A}_i} R'[j]
\end{align}
This can simply be applied into the framework by using Equation~\ref{eq:gain_feedback} in line~\ref{line:G-Gmfa-gain} of Algorithm~\ref{alg:general} in order to compute $gain(\mathcal{A}_t\cup \mathcal{A}_v)$.

\vspace{0.05in}
\noindent{\bf Existence of query logs:}
Query logs (workloads) are popular for capturing the user demand~\cite{miah2009}.
A way to design the gain function using a query log is to count the number of queries that return a tuple with that attribute combination.
This way of utilizing the workload, however, does not consider the availability of other
information such as existing tuples in the database. Especially the tuples in the database contain the {\em supply} information, which is missing from the workload as defined above.
This necessitates a practical gain function based on the combination of demand and supply, captured by the workload and by the dataset tuples, accordingly. 
Therefore,
besides more complex ways of modeling it,
the gain of a combination $\mathcal{A}_i$ can be considered as the portion of queries in the workload that are the subset of the combination (demand) divided by the portion of tuples in $\mathcal{D}$ that are the superset of it (supply). Formally:
\begin{align}\label{eq::gain-comb}
gain(\mathcal{A}_i) = \frac{n \times |\{q\in \mathcal{W} | q\subseteq \mathcal{A}_i \}|}{|\mathcal{W}| \times |Query(\mathcal{A}_i, \mathcal{D})|}
\end{align}
Now, at line~\ref{line:G-Gmfa-gain} of Algorithm~\ref{alg:general}, $gain(\mathcal{A}_t\cup \mathcal{A}_v)$ is computed using Equation~\ref{eq::gain-comb}.
\section{Experimental Evaluation}\label{sec:exp}
We now turn our attention to the experimental evaluation of the proposed algorithms.
In this section, after providing the experimental setup  in \S~\ref{subsec:exp-setup}, we evaluate the performance of our proposed algorithms over the real data collected from AirBnB  in \S~\ref{subsec:expresults} . We also provide a case study in \S~\ref{subsec:casestudy} to illustrate the practicality of the approaches. Both experimental results and case study shows the efficiency and practicality our algorithms.
\subsection{Experimental Setup}\label{subsec:exp-setup}
\vspace{0.05 in}
\noindent {\bf Hardware and Platform:} 
All the experiments were performed on a Core-I7 machine running Ubuntu 14.04 with 8 GB of RAM.
The algorithms described in \S~\ref{sec:exact} and \S~\ref{sec:gain} were implemented in Python.

\vspace{0.05 in}
\noindent {\bf Dataset:} 
The experiments were conducted over real-world data collected from \emph{AirBnB}, a travel 
peer to peer marketplace.
We collected the information of approximately 2 million \emph{real} properties around the world, shared on this website. AirBnB has a total number of 41 attributes for each property. Among all the attributes, 36 of them are boolean attributes, such as \texttt{TV}, \texttt{Internet}, \texttt{Washer}, and \texttt{Dryer}, while 5 are ordinal attributes, such as \emph{Number of Bedrooms} and \emph{Number of Beds}.
We identified 26 (boolean) flexible attributes and for practical purposes we estimated their costs for a one year period. Notice that
these costs are provided purely to facilitate the experiments; other values could be chosen and would not affect the relative performance and
conclusions in what follows.
In our estimate for attribute $cost[.]$, one attribute (\texttt{Safety card}) is less than $\$ 10$, nine attributes (eg. \texttt{Iron}) are between $\$ 10$ and $\$ 100$, fourteen (eg. \texttt{TV}) are between $\$ 100$ and $\$ 1000$, and two (eg. \texttt{Pool}) are more than $\$ 1000$.

\vspace{0.05 in}
\noindent {\bf Algorithms Evaluated:}
We evaluated the performance of the three algorithms introduced namely: {\bf B-GMFA} (Baseline GMFA), 
{\bf I-GMFA} (Improved GMFA) and {\bf G-GMFA} (General GMFA). According to \S~\ref{sec:exact}, {\bf B-GMFA} does not consider pruning 
the sublattices of the maximal affordable nodes and examines all the nodes in $\mathcal{L}_\mathcal{A}$.
In addition, for \S~\ref{sec:gain}, the performance of algorithms {\bf A-FBC} (Apriori-FBC) and {\bf FBC} is evaluated. 

\vspace{0.05 in}
\noindent {\bf Performance Measures:} 
Since the proposed algorithms identify the optimal solution, we focus our evaluations on the {\em execution time}. 
Assessing gain is conducted during the execution of GMFA algorithms; thus when comparing the performances of \S~\ref{sec:exact} algorithms, we separate
the computation time to assess gain, from the total time.
Similarly, when evaluating the algorithm in \S~\ref{sec:gain}, we evaluate instances of {\bf G-GMFA} applying each of the three 
FBC algorithms to assess gain and only consider the time required to run each FBC algorithm and compute the gain.
Algorithm~\ref{alg:FBC} ({\bf FBC}) accept the set of maximal frequent nodes ($\mathcal{F}_{v_i}$) as input.
The set of maximal frequent nodes is computed once \footnote{\small{Please note that $\forall v_i\in\mathcal{L}_{\mathcal{A}}$, $\mathcal{F}_{v_i}$ can be computed directly from $\mathcal{F}$ by projecting $\mathcal{F}$ on $v_i$ and removing the dominated nodes from the projection.}} 
and utilized across all the queries.
For fairness, we consider the time to compute $\mathcal{F}_{v_i}$ as part of the gain function computation time.

\vspace{0.05 in}
\noindent {\bf Default Values:} While varying one of the variables in each experiment, we use the following default values for the other ones.
$n$ (number of tuples): 200,000; $m$ (number of attributes): 15; $B$ (budget): \$2000; $\tau$ (frequency threshold): 0.1; $\mathcal{A}_t=\{\}$.

\begin{figure*}[!ht]    
    \begin{minipage}[t]{0.23\linewidth}
        \includegraphics[scale=0.24]{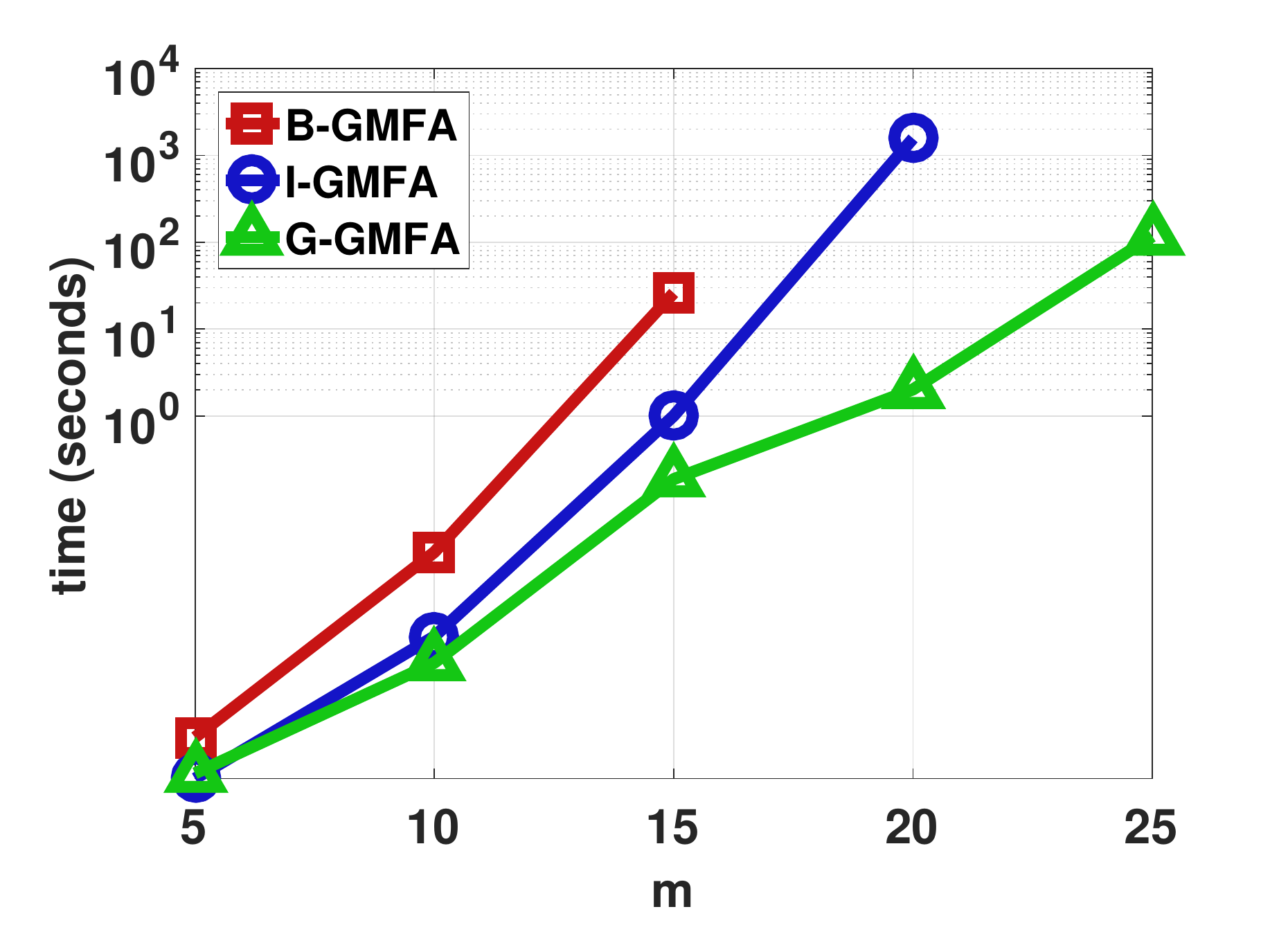}
        \caption{Impact of varying $m$ on GMFA algorithms}
        \label{fig:vm1}
    \end{minipage}
    \hspace{1mm}
    \begin{minipage}[t]{0.23\linewidth}
        \includegraphics[scale=0.24]{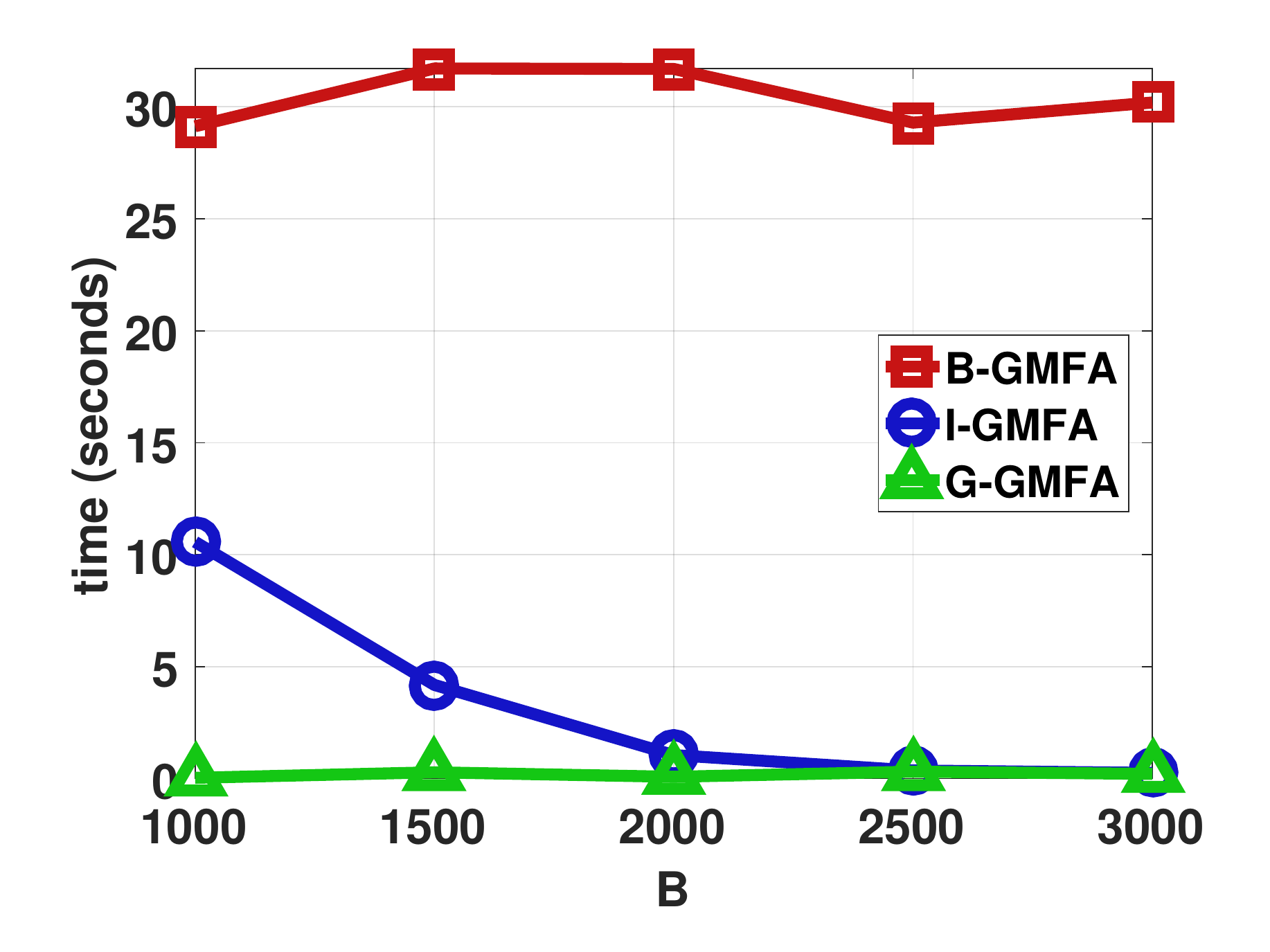}
        \caption{Impact of varying $B$ on GMFA algorithms}
        \label{fig:vb1}
    \end{minipage}
    \hspace{1mm}
    \begin{minipage}[t]{0.23\linewidth}
        \includegraphics[scale=0.24]{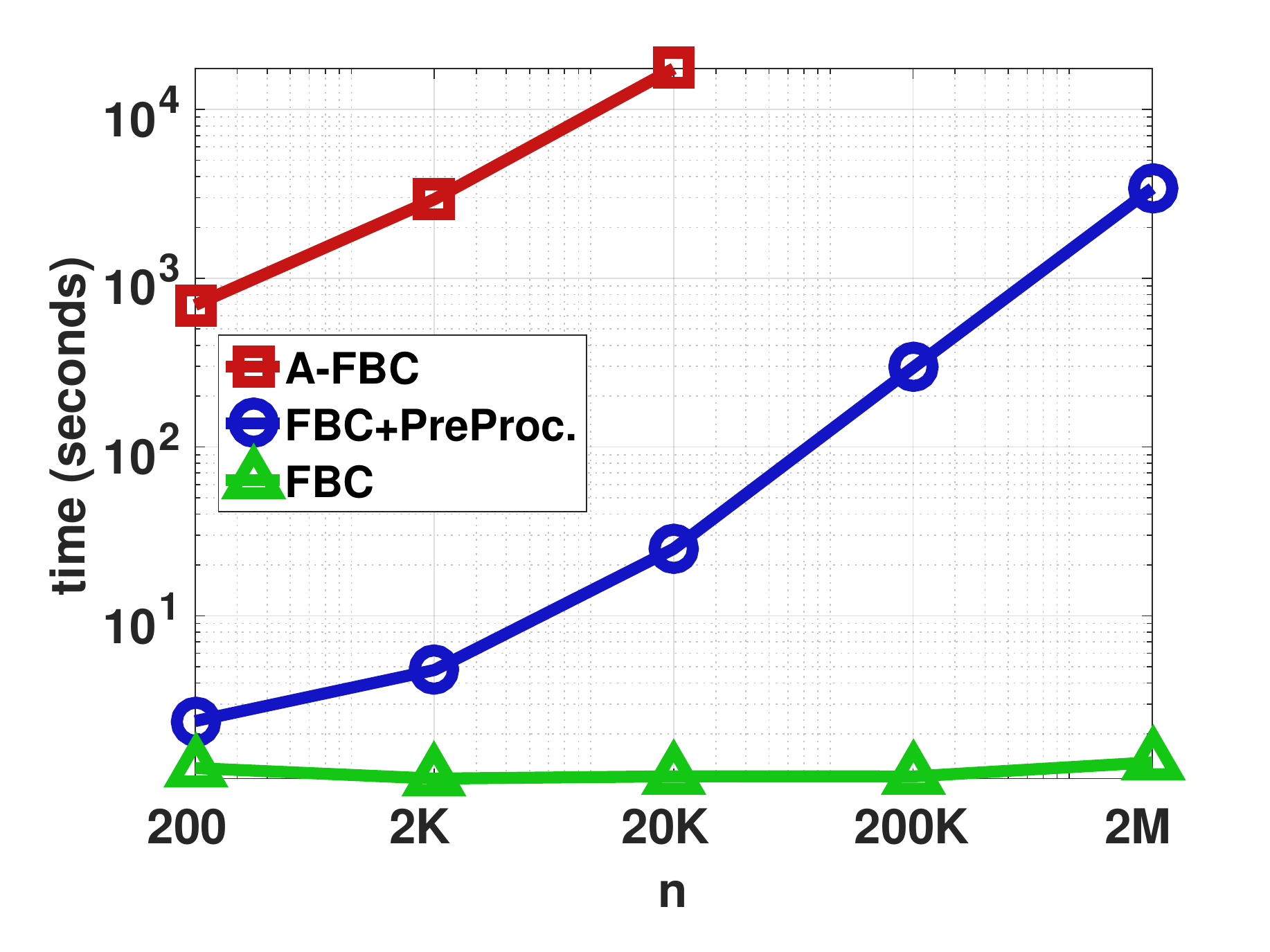}
        \caption{Impact of varying $n$ on FBC algorithms}
        \label{fig:vn1}
    \end{minipage}
    \hspace{1mm}
    \begin{minipage}[t]{0.23\linewidth}
        \includegraphics[scale=0.24]{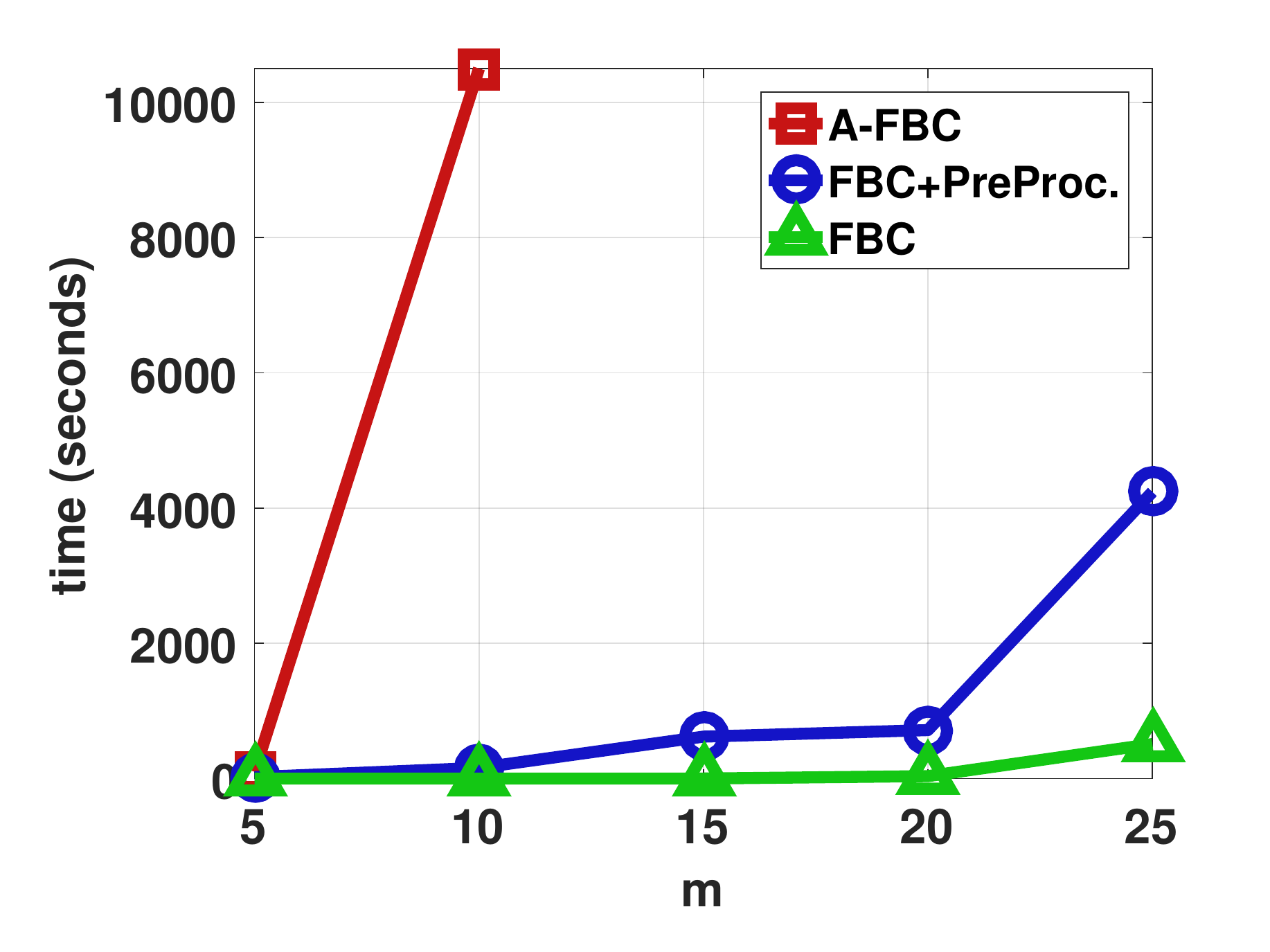}
        \caption{Impact of varying $m$ on FBC algorithms}
        \label{fig:vm2}
    \end{minipage}
    
    \hspace{7mm}
    \begin{minipage}[t]{0.3\linewidth}
        \centering
        \vspace{3mm}
        \includegraphics[scale=0.24]{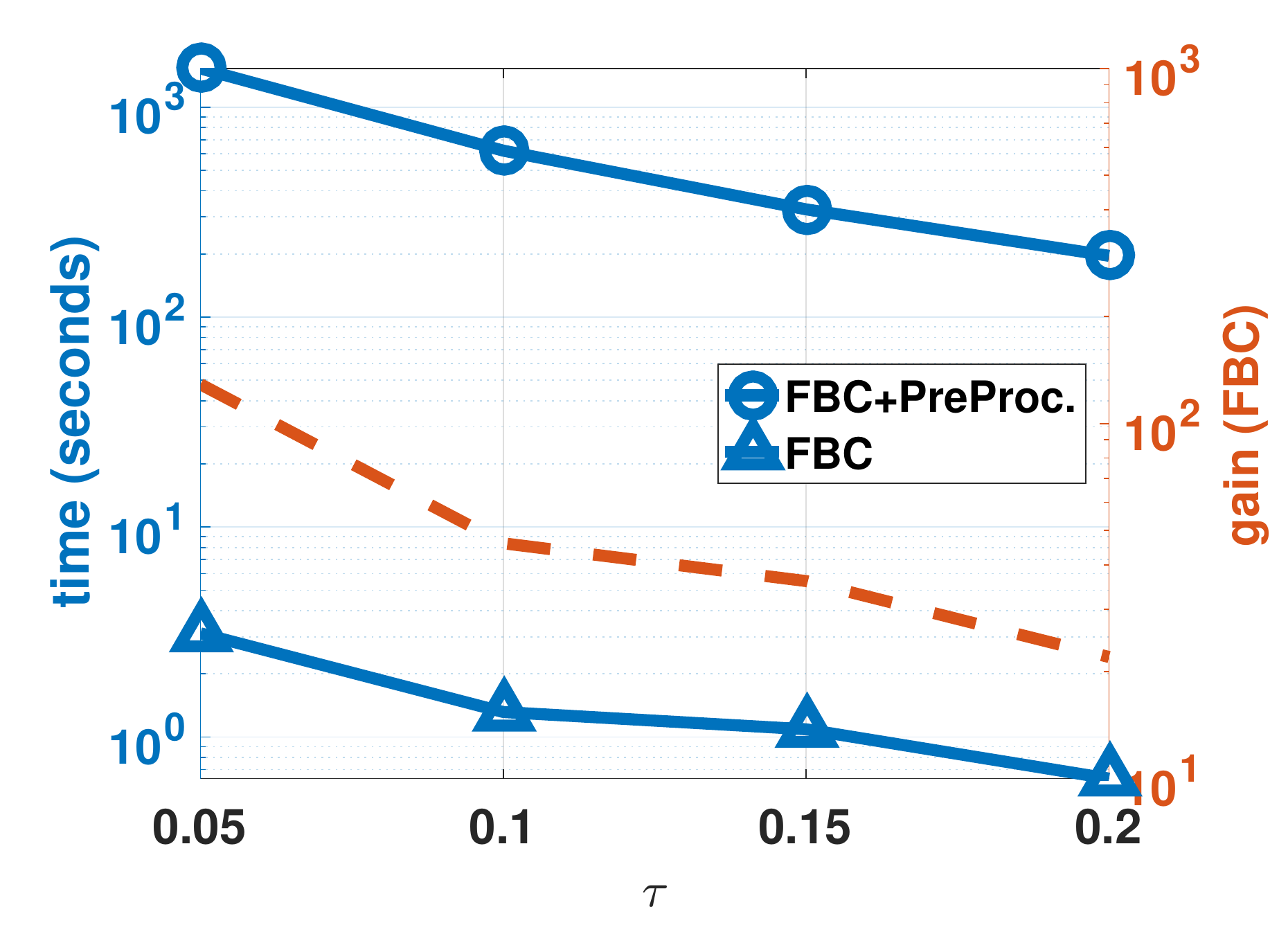}
        \caption{Impact of varying $\tau$ on FBC algorithms}
        \label{fig:vt}
    \end{minipage}
    \hspace{1mm}
    \begin{minipage}[t]{0.3\linewidth}
        \centering
        \vspace{3mm}
        \includegraphics[scale=0.24]{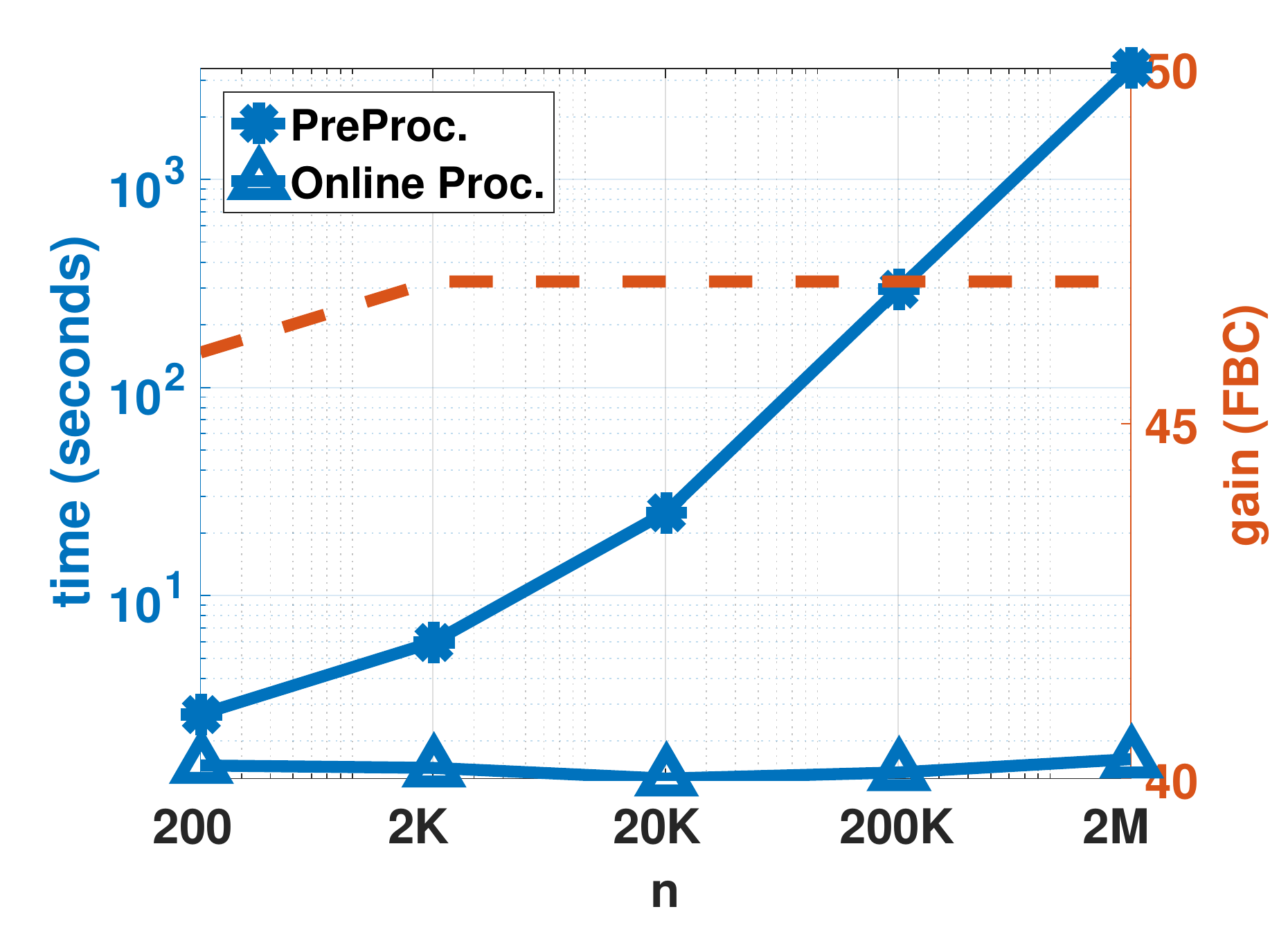}
        \caption{Impact of varying $n$ on preprocessing and online processing}
        \label{fig:vn2}
    \end{minipage}
    \hspace{1mm}
    \begin{minipage}[t]{0.3\linewidth}
       \centering
       \vspace{3mm}
        \includegraphics[scale=0.24]{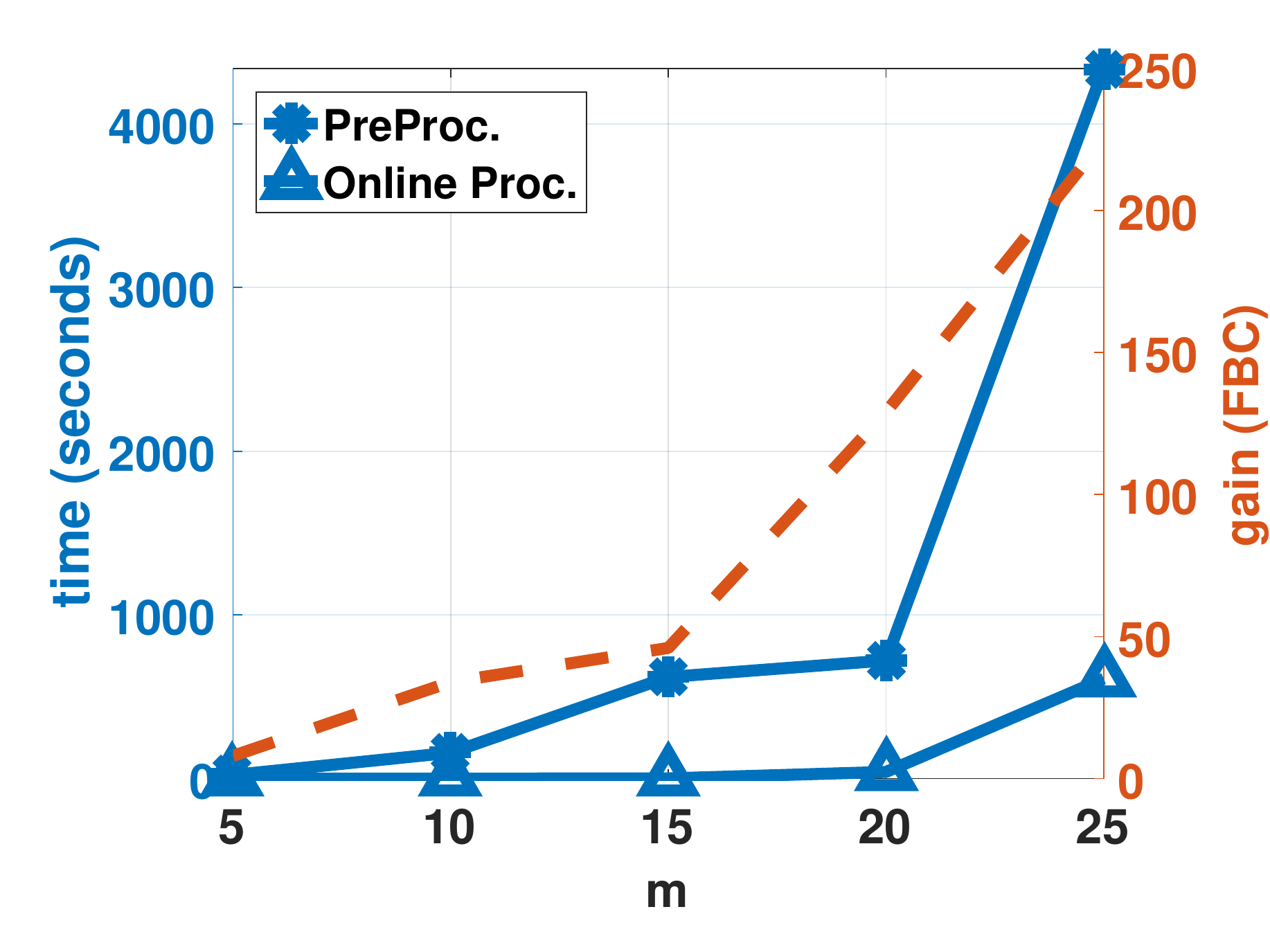}
        \caption{Impact of varying $m$ on preprocessing and online processing}
        \label{fig:vm3}
    \end{minipage}
\end{figure*}

\subsection{Experimental Results}\label{subsec:expresults}

\vspace{0.05 in}
\noindent {\bf Figure \ref{fig:vm1}: Impact of varying $m$ on GMFA algorithms.}
We first study the impact of the number of attributes ($m$) on the performance \S~\ref{sec:exact} algorithms where the GMFA problem is considered over the general 
class of monotonic gain functions. Thus, the choice of the gain function will not affect the performance of GMFA algorithms.
In this (and next) experiment we only consider the running time for the GMFA algorithms by deducting the gain function computation from the total time.
Figure~\ref{fig:vm1} presents the results for varying $m$ from $5$ to $25$ 
while keeping the other variables to their default values.
The size of $\mathcal{L}_\mathcal{A}$ exponentially depends on the number of attributes ($m$).
Thus, traversing the complete lattice, {\bf B-GMFA} did not extend beyond $15$ attributes (requiring more than $10,000$ seconds to terminate).
Utilizing the monotonicity of the gain function and pruning the nodes that are not maximal affordable, {\bf I-GMFA} could extend up to $20$ attributes. Still it requires $1566$ sec for {\bf I-GMFA} to finish with $20$ attributes and failed to extend to $25$ attributes.
Despite the exponential growth of $\mathcal{L}_\mathcal{A}$, transforming the lattice to a tree structure, reordering the attributes, and amortizing the computation costs over the nodes of the lattice, {\bf G-GMFA} performed well for all the settings requiring $116$ seconds to finish for $25$ attributes.

\vspace{0.05 in}
\noindent {\bf Figure \ref{fig:vb1}: Impact of varying $B$ on GMFA algorithms.}
In this experiment, we vary the budget from $\$ 1000$ up to $\$ 3000$ while setting other variables to their default values.
Figure~\ref{fig:vb1} presents the result of this experiment.
Since {\bf B-GMFA} traverses the complete lattice and its performance does not depend on $B$, the algorithm does not perform well and 
performance for all settings is similar.
For a small budget, the maximal affordable nodes are in the lower levels of the lattice and {\bf I-GMFA} (as well as {\bf G-GMFA}) 
require to traverse more nodes until they terminate their traversal. As the budget increases, the maximal affordable nodes move to the top of the 
lattice and the algorithm stops earlier. This is reflected in the performance of {\bf I-GMFA} in Figure~\ref{fig:vb1}.
To prevent bad performance when the budget is limited one may identify the node 
$v_i$ with the cheapest attribute combination and start the traversal from level $\ell (v_i)$. Identifying 
$\mathcal{A}_i$ can be conducted in $O(m)$ (over the sorted set of attributes) adding the cheapest attributes to $\mathcal{A}_i$ until $cost(v_i)$ is no more than $B$. No other maximal affordable node can have more attributes than $\ell (v_i)$, 
since otherwise, $v_i$ can not be the node with the cheapest attribute combination.

\vspace{0.05 in}
\noindent {\bf Figure \ref{fig:vn1}: Impact of varying $n$ on FBC algorithms.}
In next three experiments we compared the performance {\bf FBC} to that of ({\bf A-FBC}). The algorithms are utilized inside {\bf G-GMFA} 
computing the FBC (gain) of maximal frequent nodes. When comparing the performance of these algorithms, 
after running {\bf G-GMFA}, we consider the total time of each run.
The input to {\bf FBC} is the set of maximal frequent nodes, which is computed offline. Thus, during the identification of the FBC of a node $v_i$ there is no need to recompute them. However, for a fair comparison, we include in the graphs a line 
that demonstrates total time (the preprocessing time to identify maximal affordable nodes and the time required by {\bf FBC}).
We vary the number of tuples ($n$) from $200$ to $2M$ while setting the other variables to the default values.
Figure~\ref{fig:vn1} presents the results for this experiment. As reflected in the figure, {\bf A-FBC} does not extend beyond $20$K tuples 
and even for $n=20,000$ it requires more than $17,400$ seconds to complete.
Even considering preprocessing in the total running time of {\bf FBC} (blue line), it extends to $2$M tuples with a total time less than $3,500$ seconds 
(almost all the time spent during preprocessing). Note that the running time of {\bf FBC} itself does not depend on $n$. This is reflected in Figure~\ref{fig:vn1}, as in all settings (even for $n=2,000,000$) the time required by {\bf FBC} is less than $2$ seconds.

\vspace{0.05 in}
\noindent {\bf Figure \ref{fig:vm2}: Impact of varying $m$ on FBC algorithms.}
Next, we ran experiments to study the performances of {\bf A-FBC} and {\bf FBC} while varying the number of attributes from $5$ to $25$ (the other variables were set to the default values).
We utilize the two algorithms inside {\bf GMFA} and assess the time required 
by {\bf A-FBC} and {\bf FBC}; for a fair comparison, we also add a line to present the total time (preprocessing and the actual running time of {\bf FBC}).
Figure~\ref{fig:vm2} presents the results of this experiment.
{\bf A-FBC} requires $10,500$ seconds for only $10$ attributes and does not extend beyond that number of attributes in a reasonable amount of time.
On the other hand, {\bf FBC} performs well for all the settings; the total time for preprocessing and running time for {\bf FBC} on $25$ 
attributes is around $4,000$ seconds, while the time to run {\bf FBC} itself is $510$ seconds.

\vspace{0.05 in}
\noindent {\bf Figure \ref{fig:vt}: Impact of varying $\tau$ on FBC algorithms.}
We vary the frequency threshold ($\tau$) from $0.05$ to $0.2$ while keeping the other variables to their default values.
{\bf A-FBC} did not complete for any of the settings!
Thus, in Figure~\ref{fig:vt} we present the performance of {\bf FBC}; being consistent with the previous 
experiments we also present the total time (preprocessing time and the time required by {\bf FBC}).
To demonstrate the relationship between the gain and the time required by the algorithm, 
we add the FBC of the optimal solution in the right-y-axis and the dashed orange line.
When the threshold is large ($0.2$ in this experiment), for a node $v_i$ to be frequent, more tuples in $\mathcal{D}$ should have $\mathcal{A}_i$ 
in their attributes in order to be frequent. As a result, smaller number of nodes are frequent and the maximal frequent n
odes appear at the lower levels of the lattice. Thus, preprocessing stops earlier and we require less time to identify the set of maximal frequent nodes.
Also, having smaller number of nodes for larger thresholds, the gain (FBC) of the optimal solution decreases as the threshold increases. 
In this experiment, the gain of the optimal solution for $\tau=0.05$ was $129$ while it was $22$ for $\tau=0.2$.
As per Theorem 3, {\bf FBC} is output sensitive and its worst-case time complexity linearly depends on its output value. This is reflected in this experiment; the running time of {\bf FBC} drops when the FBC of the optimal solution drops. The time required by {\bf FBC} in this experiment is less than 
$3.2$ seconds for all settings.

\vspace{0.05 in}
\noindent {\bf Figures \ref{fig:vn2} and~\ref{fig:vm3}: Impacts of varying $n$ and $m$ in the online and offline running times.}
We evaluate the performance of {\bf G-GMFA} with {\bf FBC} and perform two experiments to study offline processing time (identifying the maximal affordable nodes) and the total online (query answering) time, namely the total time taken by {\bf G-GMFA} and {\bf FBC} in each run.
We also include the right-x-axis and the dashed orange line to report the gain (FBC) of the optimal solution.
First, we vary the number of tuples ($n$) between $200$ and $2$M while setting the other variables to the default values.
The results are presented in Figure~\ref{fig:vn2}. While the preprocessing time increases from less than $3$ seconds for $n=200$ up to around 
$3,400$ seconds for $n=2,000,000$, the online processing time (i.e., {\bf GMFA} and {\bf FBC} times) does not depend on $n$, and for all the settings (including $n=2,000,000$) requires less than $2$ seconds to complete.
This verifies the suitability of the proposed algorithm for online query answering for datasets at scale.
Next, we vary $m$ from $5$ to $25$ and set the other variables to default values.
While increasing $m$ exponentially increases the size of $\mathcal{L}_{\mathcal{A}}$, it also increases preprocessing time to identify
 the maximal frequent nodes. In this experiment the (offline) preprocessing takes between $22$ seconds for $m=5$ and $4,337$ seconds for $m=25$.
The increase in the lattice size increases the gain (FBC) of the optimal solution (right-y-axis)
from $8$ for $m=5$ to $225$ for $m=25$, so does the online processing time. Still the total time to execute {\bf G-GMFA} 
with {\bf FBC} for $25$ attributes is around $600$ seconds.
In practice, since $\mathcal{A}_t$ is probably not empty set, the number of remaining attributes is less than $m$ and one
may also select a subset of them as flexible attributes; we utilize $m=25$ to demonstrate that even for the extreme cases 
where the number of attributes to be considered is large, the proposed algorithms are efficient and provide query answers in reasonable time.
\subsection{Case Study}\label{subsec:casestudy}
We preformed a real case study on the actual AirBnB rental accommodations in two popular locations: {\bf Paris} and {\bf New York City}.
We used the location information of the accommodations, i.e., latitude and longitude for the filtering, and found $42,470$ rental accommodations in Paris and $37,297$ ones in New York City.
We considered two actual accommodations, one in each city, offering the same set of amenities.
These accommodations lack providing the following amenities: 
\texttt{Air Conditioning}, \texttt{Breakfast}, \texttt{Cable TV}, \texttt{Carbon Monoxide Detector}, \texttt{Doorman}, \texttt{Dryer}, \texttt{First- Aid Kit}, \texttt{Hair Dryer}, \texttt{Hot Tub}, \texttt{Indoor Fireplace}, \texttt{Internet}, \texttt{Iron}, \texttt{Laptop Friendly Workspace}, \texttt{Pool}, \texttt{TV}, and \texttt{Washer}.
We used the same cost estimation discussed in \S~\ref{subsec:exp-setup}, assumed the budget $B=\$ 2000$ for both accommodations, and ran {\bf GMFA} while considering {\bf FBC} (with threshold $\tau=0.1$) as the gain function. It took $0.195$ seconds to finish the experiment for Paris $0.365$ seconds for New York City.
While the optimal solution suggests offering \texttt{Breakfast}, \texttt{First Aid Kit}, \texttt{Internet}, and \texttt{Washer} in Paris, it suggests adding
\texttt{Carbon Monoxide Detector}, \texttt{Dryer}, \texttt{First Aid Kit}, \texttt{Hair Dryer}, \texttt{Inter-\\net}, \texttt{Iron}, \texttt{TV}, and \texttt{Washer}
in New York City.
Comparing the results for the two cases reveals the popularity of providing \texttt{Breakfast} in Paris, whereas the combination of \texttt{Carbon Monoxide Detector}, \texttt{Dryer}, \texttt{Hair Dryer}, \texttt{Iron}, and \texttt{TV} are preferred in New York City.
\section{Related Work}\label{sec:related}
\vspace{0.05in}
\noindent\textbf{Product Design}: 
The problem of product design has been studied by many disciplines such as economics, industrial engineering, operations research, and computer science~\cite{albers1980, selker2000, shocker1974}. More specifically, manufacturers want to understand the preferences of their customers and potential customers for the products and services they are offering or may consider offering. Many factors like the cost and return on investment are currently considered. Work in this domain requires direct involvement of consumers, who choose preferences from a set of existing alternative products. While the focus of existing work is to identify the set of attributes and information collection from various sources for a single product, our goal in this paper is to use the existing data for providing a tool that helps service providers as a part of peer to peer marketplace. In such marketplaces, service providers (e.g. hosts) are the customers of the website owner (e.g., AirBnB) who aim to list their service for other customers (e.g. guests) which makes the problem more challenging. The problem of item design in relation to social tagging is studied in \cite{das2011}, where the goal is to creates an opportunity for designers to build items that are likely to attract desirable tags when published. \cite{miah2009} also studied the problem of selecting the snippet for a product so that it stands out in the crowd of existing competitive products and is widely visible to the pool of potential buyers. However, none of these works has studied the problem of maximizing gain over flexible attributes.

\vspace{0.05in}
\noindent\textbf{Frequent itemsets count}: 
Finding the number of frequent itemsets and number of maximal frequent itemsets has been shown to be \#P-complete \cite{han2000,gunopulos2003discovering}. The authors in~\cite{geerts2005} provided an estimate for the number of frequent itemset candidates containing k elements rather than true frequent itemsets. Clearly, the set of candidate frequent itemsets can be much larger than the true frequent itemset. In~\cite{lhote2005average} the authors theoretically estimate the average number of frequent itemsets under the assumption that the transactions matrix is subject to either simple Bernoulli or Markovian model. In contrast, we do not make any probabilistic assumptions about the set of transactions and we focus on providing a practical exact algorithm for the frequent-item based count.

\section{Final Remarks}\label{sec:conclusion}
We proposed the problem of gain maximization over flexible attributes (GMFA) in the context of peer to peer marketplaces. Studying the complexity of the problem and the difficulty of designing an approximate algorithm for it, we provided a practically efficient algorithm for solving GMFA in a way that it is applicable for any arbitrary gain function, as long as it is monotonic.
We presented frequent-item based count (FBC), as an alternative practical gain function in the absence of extra information such as user preferences, and proposed an efficient algorithm for computing it. The results of the extensive experiment on a real dataset from AirBnB and the case study confirmed the efficiency and practicality of our proposal.

The focus of this paper is such that it works for various application specific gain functions.
Of course, which gain function performs well in practice depends on the application and the availability of the data.
A comprehensive study of the wide range of possible gain functions and their applications left as future work.
\bibliographystyle{abbrv}
\bibliography{ref}

\begin{thebibliography}{10}

\bibitem{apriori}
R.~Agrawal, R.~Srikant, et~al.
\newblock Fast algorithms for mining association rules.
\newblock In {\em Proc. 20th int. conf. very large data bases, VLDB}, volume
  1215, pages 487--499, 1994.

\bibitem{albers1980}
S.~Albers and K.~Brockhoff.
\newblock Optimal product attributes in single choice models.
\newblock {\em Journal of the Operational Research Society}, 31(7):647--655,
  1980.

\bibitem{queryreranking}
A.~Asudeh, N.~Zhang, and G.~Das.
\newblock Query reranking as a service.
\newblock {\em Proceedings of the VLDB Endowment}, 9(11):888--899, 2016.

\bibitem{bertsekas1991optimal}
D.~P. Bertsekas, C.~{\"O}zveren, G.~D. Stamoulis, P.~Tseng, and J.~N.
  Tsitsiklis.
\newblock Optimal communication algorithms for hypercubes.
\newblock {\em Journal of Parallel and Distributed Computing}, 11(4):263--275,
  1991.

\bibitem{das2011}
M.~Das, G.~Das, and V.~Hristidis.
\newblock Leveraging collaborative tagging for web item design.
\newblock In {\em Proceedings of the 17th ACM SIGKDD international conference
  on Knowledge discovery and data mining}, pages 538--546. ACM, 2011.

\bibitem{ertz2016collaborative}
M.~Ertz, F.~Durif, and M.~Arcand.
\newblock Collaborative consumption or the rise of the two-sided consumer.
\newblock {\em The International Journal of Business \& Management,
  Forthcoming}, 2016.

\bibitem{feldman2014constrained}
M.~Feldman and R.~Izsak.
\newblock Constrained monotone function maximization and the supermodular
  degree.
\newblock {\em arXiv preprint arXiv:1407.6328}, 2014.

\bibitem{garyjohnson}
M.~R. Gary and D.~S. Johnson.
\newblock Computers and intractability: A guide to the theory of
  np-completeness, 1979.

\bibitem{geerts2005}
F.~Geerts, B.~Goethals, and J.~V.~D. Bussche.
\newblock Tight upper bounds on the number of candidate patterns.
\newblock {\em ACM Transactions on Database Systems (TODS)}, 30(2):333--363,
  2005.

\bibitem{gunopulos2003discovering}
D.~Gunopulos, R.~Khardon, H.~Mannila, S.~Saluja, H.~Toivonen, and R.~S. Sharma.
\newblock Discovering all most specific sentences.
\newblock {\em ACM Transactions on Database Systems (TODS)}, 28(2):140--174,
  2003.

\bibitem{han2000}
J.~Han, J.~Pei, and Y.~Yin.
\newblock Mining frequent patterns without candidate generation.
\newblock In {\em ACM Sigmod Record}, volume~29, pages 1--12. ACM, 2000.

\bibitem{lhote2005average}
L.~Lhote, F.~Rioult, and A.~Soulet.
\newblock Average number of frequent (closed) patterns in bernoulli and
  markovian databases.
\newblock In {\em Data Mining, Fifth IEEE International Conference on}, pages
  4--pp. IEEE, 2005.

\bibitem{miah2009}
M.~Miah, G.~Das, V.~Hristidis, and H.~Mannila.
\newblock Determining attributes to maximize visibility of objects.
\newblock {\em IEEE Transactions on Knowledge and Data Engineering},
  21(7):959--973, 2009.

\bibitem{pisinger2007quadratic}
D.~Pisinger.
\newblock The quadratic knapsack problem—a survey.
\newblock {\em Discrete applied mathematics}, 155(5):623--648, 2007.

\bibitem{rader2002quadratic}
D.~J. Rader~Jr and G.~J. Woeginger.
\newblock The quadratic 0--1 knapsack problem with series--parallel support.
\newblock {\em Operations Research Letters}, 30(3):159--166, 2002.

\bibitem{selker2000}
T.~Selker and W.~Burleson.
\newblock Context-aware design and interaction in computer systems.
\newblock {\em IBM systems Journal}, 39(3.4):880--891, 2000.

\bibitem{shocker1974}
A.~D. Shocker and V.~Srinivasan.
\newblock A consumer-based methodology for the identification of new product
  ideas.
\newblock {\em Management Science}, 20(6):921--937, 1974.

\bibitem{quadraticknapsack}
C.~Witzgall.
\newblock Mathematical methods of site selection for electronic message systems
  (ems).
\newblock {\em NASA STI/Recon Technical Report N}, 76:18321, 1975.

\end{thebibliography}
\end{document}